\documentclass{article}

\usepackage{amsmath,amssymb,amsthm}

\usepackage{tikz}
\usetikzlibrary{external}

\usetikzlibrary{patterns}
\usetikzlibrary{plotmarks}
\usetikzlibrary{shapes}
\tikzstyle{legend}=[draw,fill=white,text width=12em,text ragged,
anchor=north east,inner sep=2pt]
\tikzstyle{legendsw}=[draw,fill=white,text width=12em,text ragged,
anchor=south west,inner sep=2pt]
\colorlet{dimgray}{black!5}
\colorlet{lightgray}{black!20}


\newtheorem{theorem}{Theorem}[section]

\newtheorem{lemma}[theorem]{Lemma}
\newtheorem{proposition}{Proposition}

\theoremstyle{definition}

\newtheorem*{IDT}{Interpolation Decoding Algorithm}

\newcommand{\F}{\mathbb{F}}

\newcommand{\A}{\mathbb{A}}
\newcommand{\ga}{\alpha}
\newcommand{\gb}{\beta}

\newcommand{\gd}{\delta}

\newcommand{\gl}{\lambda}

\newcommand{\gw}{\omega}
\renewcommand{\phi}{\varphi}

\newcommand{\calP}{\mathcal{P}}

\newcommand{\frakm}{\mathfrak{m}}

\newcommand{\set}[1]{\{#1\}}
\newcommand{\tuple}[1]{\langle{#1}\rangle}

\newcommand{\ideal}[1]{\langle{#1}\rangle}
\newcommand{\ev}{\mathrm{ev}}
\newcommand{\wt}{\mathrm{wt}}

\newcommand{\LT}{\mathrm{lt}}
\newcommand{\LM}{\mathrm{lm}}
\newcommand{\LC}{\mathrm{lc}}
\newcommand{\spoly}{\mathrm{spoly}}

\begin{document}

\title{Unique Decoding of Plane AG Codes Revisited}

\author{Kwankyu Lee\thanks{The author is the Department of Mathematics, Chosun University, Gwangju 501-759, Korea. This work was supported by Basic Science Research Program through the National Research Foundation of Korea(NRF) funded by the Ministry of Education, Science and Technology(2009-0064770) and also by research fund from Chosun University, 2008.}}

\maketitle

\begin{abstract}
We reformulate a recently introduced interpolation-based unique decoding algorithm of algebraic geometry codes using the theory of Gr\"obner bases of modules on the coordinate ring of the base curve. With the same decoding performance, the new algorithm has a more conceptual description that lets us better understand the majority voting procedure central in the interpolation-based unique decoding.  
\end{abstract}

\section{Introduction}

Recently a new kind of unique decoding algorithm of algebraic geometry codes appeared \cite{kwankyu2011}. The algorithm decodes the primal AG code that consists of codewords obtained by evaluation of functions at rational points of an algebraic curve, unlike the classical syndrome decoding algorithm that decodes the dual code. Based on Gr\"obner bases of modules over a univariate polynomial ring, the algorithm has a regular data and control structure that is suitable for parallel hardware implementation, like K\"otter's algorithm for the syndrome decoding \cite{koetter1998}. 

In this paper, we reformulate the previous algorithm, using the theory of Gr\"obner bases of modules on the coordinate ring of the base curve. This approach eliminates the technical complexity of the previous algorithm in a large degree, and results in a conceptually clean description of the algorithm which would contribute for better understanding the majority voting procedure, which plays a central role in the interpolation-based unique decoding. The new algorithm resembles the Berlekamp-Massey-Sakata algorithm for the syndrome decoding \cite{sakata1995}. 

In Section \ref{mckwd}, we review the theory of the Gr\"obner bases of modules over the coordinate rings of algebraic curves, and outline the interpolation decoding algorithm based on it.  The algorithm operates by iterating two core steps, the Gr\"obner basis computation step and the message guessing step. Sections \ref{jksadde} and \ref{kleflso} are devoted for each step. In Section \ref{ckdwdd}, we demonstrate the algorithm with Hermitian codes. In the remainder of this introduction, we briefly review basic facts about AG codes. Like the previous algorithm in \cite{kwankyu2011} and the BMS decoding algorithm, the new algorithm is formulated for the AG codes from the Miura-Kamiya curves \cite{miura2007}, which include Hermitian curves as prominent special cases.

A Miura-Kamiya curve $X$ is an irreducible plane curve defined by the equation 
\[
	y^a+\sum_{ai+bj<ab}c_{i,j}x^iy^j+dx^b=0
\]
over a field $\F$ with $\gcd(a,b)=1$ and $0\neq d\in\F$.  It is well known that $X$ has a unique point $P_\infty$ at infinity and has a unique valuation $v_{P_\infty}$ associated with it. Let $\gd(f)=-v_{P_\infty}(f)$ for $f$ in the coordinate ring $R$ of $X$. Then $\gd(x)=a$ and $\gd(y)=b$. By the equation of the curve, a function in the coordinate ring $R=\F[x,y]$ can be written as a unique $\F$-linear combination of monomials $x^iy^j$ with $i\ge 0$ and $0\le j<a$, which we call monomials of $R$. The numerical semigroup of $R$ at $P_\infty$,
\[
	\begin{split}
	S&=\set{\gd(f)\mid f\in R}=\set{\gd(x^iy^j)\mid i\ge 0,0\le j<a}\\
	&=\set{ai+bj\mid i\ge 0,0\le j<a}=\ideal{a,b}
	\end{split}
\]
is a subset of the Weierstrass semigroup at $P_\infty$. As $\gcd(a,b)=1$, there is an integer $b'$ such that $b'b\equiv 1\pmod{a}$. If $s=ai+bj$ is a nongap, then $b's\bmod a=j$, $(s-bj)/a=i$, and therefore $i$ and $j$ are uniquely determined. Hence the monomials of $R$ are in one-to-one correspondence with nongaps in $S$. For a nongap $s$, let $\phi_s$ be the unique monomial with $\gd(\phi_s)=s$. 

Let $\calP=\set{P_1,P_2,\dots,P_n}$ be a set of nonsingular rational points of $X$. The evaluation $\ev$ from $R$ to the Hamming space $\F^n$ defined by 
\[
	\phi\mapsto(\phi(P_1),\phi(P_2),\dots,\phi(P_n))
\]
is a linear map over $\F$. Let $u$ be a fixed positive integer less than $n$ and define
\[
	L_u=\set{f\in R\mid \gd(f)\le u}=\langle\phi_s\mid s\in S, s\le u\rangle,
\]
where brackets denote the linear span over $\F$. Then the AG code $C_u$ is defined as the image of $L_u$ under $\ev$. As $u<n$, the evaluation is one-to-one on $L_u$. Therefore the dimension of the linear code $C_u$ equals $\dim_\F L_u=|\set{s\in S\mid s\le u}|$. 

\section{Interpolation decoding}\label{mckwd}

We assume a codeword $c$ in $C_u$ is sent through a noisy communication channel and $v\in\F^n$ is the vector  received from the channel. Let $v=c+e$ with the error vector $e$. Then $c=\ev(\mu)$ for a unique 
\[
	\mu=\sum_{s\in S,s\le u}\gw_s\phi_s\in L_u,\qquad \gw_s\in\F
\]
We assume encoding by evaluation, and the vector $(\gw_s\mid s\in S,s\le u)$ is the message encoded into the codeword $c$. The decoding problem is essentially to find $\gw_s$ for all nongap $s\le u$ from the given $v$.

For $s\ge u$, let $v^{(s)}=v$, $c^{(s)}=c$, and $\mu^{(s)}=\mu$. For nongap $s\le u$, let
\[
\begin{aligned}
	\mu^{(s-1)}&=\mu^{(s)}-\gw_s\phi_s,\\
	c^{(s-1)}&=c^{(s)}-\ev(\gw_s\phi_s),\\
	v^{(s-1)}&=v^{(s)}-\ev(\gw_s\phi_s),
\end{aligned}
\]
and for gap $s\le u$, let $v^{(s-1)}=v^{(s)}$, $c^{(s-1)}=c^{(s)}$, and $\mu^{(s-1)}=\mu^{(s)}$. Note that 
\[
	\mu^{(s)}\in L_s,\quad c^{(s)}=\ev(\mu^{(s)})\in C_s,\quad v^{(s)}=c^{(s)}+e
\]
for all $s$. Hence we can find $\gw_s$ iteratively.

A polynomial in $R[z]$ defines a function on the product surface of $X$ and the line $\A_\F^1$, and can be evaluated at a point $(P,\ga)$ with $P\in X,\ga\in\F$. Hence we can define the \emph{interpolation module}
\[
	I_v=\set{ f \in Rz\oplus R\mid f(P_i,v_i)=0, 1\le i\le n}
\]
for $v$ and similarly for $v^{(s)}$. These interpolation modules are indeed modules over $R$, and finite-dimensional vector space over $\F$. Note that
\begin{equation}\label{ckskw}
	I_v=R(z-h_v)+J
\end{equation}
where
\[
	J=\bigcap_{1\le i\le n}\frakm_i,
	\qquad
	\ev(h_v)=v,
\]
and $\frakm_i=\ideal{x-\ga_i,y-\gb_i}$ is the maximal ideal of $R$ associated with $P_i=(\ga_i,\gb_i)$. Recall that by Lagrange interpolation, $h_v$ can be computed fast from $v$. We will see that the key to find $\gw_s$ is the Gr\"obner basis of $I_{v^{(s)}}$ with respect to a monomial order $>_s$, which is defined in the following. 

Let $s$ be an integer. The monomial $x^iy^jz^k$  of $R[z]$ is given the weight $\gd(x^iy^j)+sk$. In particular, the weighted degrees of the monomials $x^iy^jz$ and $x^iy^j$ of $Rz\oplus R$ are $ai+bj+s$ and $ai+bj$, respectively. The monomial order $>_s$ on $Rz\oplus R$ orders the monomials by their weighted degrees, and breaks the tie with higher $z$-degree. For $f$ in $Rz\oplus R$, the notations $\LT_s(f)$, $\LM_s(f)$, and $\LC_s(f)$ denote the leading term, the leading monomial, and the leading coefficient of $f$, respectively, with respect to $>_s$. As $f=f^Uz+f^D$ with unique $f^U,f^D\in R$\footnote{The superscripts $U$ and $D$ may be read ``upstairs'' and ``downstairs'', respectively (with $z$ being the staircase).}, note that
\[
	\LM_s(f)\in Rz \iff \gd(f^U)+s\ge \gd(f^D),
\]
where equality holds if and only if $\LM_s(f)\in Rz$ and $\LM_{s-1}(f)\in R$.
 
Now let $M$ be a submodule of $Rz\oplus R$. A subset $B$ of $M$ is called a \emph{Gr\"obner basis} with respect to $>_s$ if the leading term of every element of $M$ is divided by the leading term of some element of $B$. We will write
 \[
	B=\set{G_i,F_j}
\]
with $i$ and $j$ in some implicit index sets, where the leading term of $G_i$ is in $R$ while that of $F_j$ is in $Rz$. The \emph{sigma set} $\Sigma_s$ or $\Sigma_s(M)$ of $M$ is the set of all leading monomials of polynomials in $M$ with respect to $>_s$. The \emph{delta set} $\Delta_s$ or $\Delta_s(M)$ of $M$ is the complement of $\Sigma_s$ in the set of all monomials of $Rz\oplus R$. We note that
\[
	\begin{aligned}
	\Delta_s&=(Rz\cap\Delta_s)\sqcup(R\cap\Delta_s),\\
	\Sigma_s&=(Rz\cap\Sigma_s)\sqcup(R\cap\Sigma_s).
	\end{aligned}
\]
where $\sqcup$ denotes disjoint union. For the case that $M$ is an ideal of $R$, we may omit the superfluous $s$ from the above notations, and denote $>_s$ simply by $>$ in particular. Note that if $\LM_s(f)\in Rz$, then $\LM_s(f)=\LM(f^U)z$, and if $\LM_s(f)\in R$, then $\LM_s(f)=\LM(f^D)$. It is easy to see by the definition of Gr\"obner bases that
\[
	\begin{split}
	\dim_\F(Rz\oplus R/M)&=|\Delta_s|=|\Delta_s\cap Rz|+|\Delta_s\cap R|\\
	&=|\Delta(\set{F_j^U})|+|\Delta(\set{G_i^D})|,
	\end{split}
\]
where $\Sigma(T)$, $\Delta(T)$ with a set of polynomials in $R$ have natural definitions. 

As $J$ is an ideal of $R$, it has a Gr\"obner basis $\set{\eta_i}$ with respect to $>$, and 
\begin{equation}\label{camdzs}
	\dim_\F R/J=|\Delta(J)|=|\Delta(\set{\eta_i})|=n
\end{equation}
since $J$ is the ideal associated with the sum of $n$ rational points on $X$. By \eqref{ckskw}, we see that $\dim_\F(Rz\oplus R/I_v)=\dim_\F(R/J)=n$. Let $N=\gd(h_v)$. The set $\set{\eta_i}\cup\set{z-h_v}$ is then a Gr\"obner basis of $I_v$ with respect to $>_N$. Let us denote a Gr\"obner basis of $I_{v^{(s)}}$ with respect to $>_s$ by $B^{(s)}=\set{G_i,F_j}$. Observe that if $s$ is a nongap $\le u$, then the set $\tilde{B}=\set{G_{i}(z+\gw_s\phi_s),F_j(z+\gw_s\phi_s)}$ is still a Gr\"obner basis of $I_{v^{(s-1)}}$ with respect to $>_s$, but not with respect to $>_{s-1}$ in general. These observations lead to the following interpolation decoding algorithm. 
 
\begin{IDT}
Let $v$ be the received vector.
\begin{description}
\item[Initialize]
Compute $h_v$. Let $B^{(N)}=\set{\eta_i}\cup\set{z-h_v}$ where $N=\gd(h_v)$.
\item[Main]
Repeat the following for $s$ from $N$ to $0$. 
\begin{description}
\item[M1] 
If $s$ is a nongap $\le u$, then make a guess $w^{(s)}$ for $\gw_s$, and let $\tilde{B}=\set{G_{i}(z+w^{(s)}\phi_s),F_j(z+w^{(s)}\phi_s)}$. Otherwise, let $\tilde{B}=B^{(s)}$.
\item[M2]
Compute $B^{(s-1)}$ from $\tilde{B}$.
\end{description}
\item[Finalize]
Output $(w^{(s)}\mid \text{nongap $s\le u$})$.
\end{description}
\end{IDT}

In the next section, we will elaborate on the step \textbf{M2}. The results in the section will lay a foundation for Section \ref{kleflso}, in which we give details of the main steps \textbf{M1} and \textbf{M2}.  

\section{Gr\"obner basis computation}\label{jksadde}

First we review the concept of the \emph{lcm}, least common multiple, for the monomials of $R$. For two monomials $\phi_s$ and $\phi_t$, we say $\phi_s$ \emph{divides} $\phi_t$ if there exists a unique monomial $\gl$ such that
\[
	\gd(\phi_t-\gl\phi_s)<\gd(\phi_t).
\]
The unique monomial $\gl$ will be denoted by the quotient $\phi_t/\phi_s$. Note that $\phi_s$ divides $\phi_t$ if and only if $t-s$ is a nongap, and in this case, actually $\gl=\phi_{t-s}$. We will also simply say $s$ \emph{divides} $t$ if $t-s$ is a nongap.

\begin{proposition}
Let $s$ and $t$ be nongaps that do not divide each other. Then there are unique nongaps $l_1$ and $l_2$ such that
$l_1$ and $l_2$ are both divisible by $s$ and $t$, and if a nongap $c$ is divisible by $s$ and $t$, then $l_1$ or $l_2$ divides $c$.
\end{proposition}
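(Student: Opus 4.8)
The plan is to translate the statement into a combinatorial fact about the numerical semigroup $S=\langle a,b\rangle$ and then verify it directly. Write $\phi_s=x^{i_s}y^{j_s}$ with $i_s\ge 0$ and $0\le j_s<a$, so that $s=ai_s+bj_s$, and likewise $\phi_t=x^{i_t}y^{j_t}$. The first step is to express divisibility in these coordinates. A short computation---rewriting $t-s=a(i_t-i_s)+b(j_t-j_s)$ so that the coefficient of $b$ lies in $[0,a)$, via $b(j_t-j_s)=b(j_t-j_s+a)-ab$ when $j_t<j_s$---shows that $s$ divides $t$ if and only if either $j_s\le j_t$ and $i_s\le i_t$, or $j_s>j_t$ and $i_s+b\le i_t$. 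In particular $s$ and $t$ are comparable whenever $j_s=j_t$, so the hypothesis forces $j_s\ne j_t$; after interchanging $s$ and $t$ if necessary we may assume $j_s<j_t$, and then ``$s$ and $t$ do not divide each other'' is precisely the pair of inequalities $0<i_s-i_t<b$.

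Next I would simply exhibit the two nongaps: set $l_1=a(i_t+b)+bj_s$ and $l_2=ai_s+bj_t$, so that $\phi_{l_1}=x^{i_t+b}y^{j_s}$ and $\phi_{l_2}=x^{i_s}y^{j_t}$ are genuine monomials of $R$ since $0\le j_s<j_t<a$. The four memberships $l_1-s,\ l_1-t,\ l_2-s,\ l_2-t\in S$ follow at once from $0<i_s-i_t<b$ and $0<j_t-j_s<a$; for instance $l_1-s=a\bigl(b-(i_s-i_t)\bigr)$, $l_1-t=b\bigl(a-(j_t-j_s)\bigr)$, $l_2-s=b(j_t-j_s)$, and $l_2-t=a(i_s-i_t)$. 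Hence $l_1$ and $l_2$ are common multiples of $s$ and $t$. For the covering property, take any nongap $c=ai_c+bj_c$ divisible by both $s$ and $t$ and split into the three ranges $j_c\ge j_t$, $j_s\le j_c<j_t$, and $j_c<j_s$. Applying the divisibility criterion to $s\mid c$ in the first and third ranges and to $t\mid c$ in the second gives $i_c\ge i_s$, $i_c\ge i_t+b$, and $i_c\ge i_s+b$ respectively; read against the position of $j_c$, these are exactly the conditions for $l_2\mid c$, $l_1\mid c$, and $l_2\mid c$.

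For uniqueness, I would note that $l_1$ and $l_2$ are incomparable---neither $l_1-l_2$ nor $l_2-l_1$ lies in $S$, again by the criterion together with $0<i_s-i_t<b$---so in particular $l_1\ne l_2$, and moreover $l_1,l_2$ are exactly the divisibility-minimal elements of the set $U=(s+S)\cap(t+S)$ of common multiples: if some $c\in U$ with $c\ne l_k$ divided $l_k$, then the covering property would make $c$ divisible by $l_1$ or $l_2$, forcing $l_1\mid l_2$ or $l_2\mid l_1$; and any minimal element of $U$ equals $l_1$ or $l_2$ by the covering property. Since the minimal elements of $U$ are determined by $U$ alone, any pair of nongaps satisfying the two stated conditions must be $\{l_1,l_2\}$.

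The normal-form bookkeeping and the inequality checks are routine; the one genuinely delicate point is the third range $j_c<j_s$ of the covering step. The naive guess is that the generator $l_1$, which ``sits below'' $j_s$, should divide such a $c$, but the wrap-around in the divisibility criterion forces the stronger bound $i_c\ge i_s+b$, and it is $l_2$ that divides $c$. This is exactly why there are two minimal common multiples and not three: the corner $x^{i_s+b}y^{j}$ with $j<j_s$ that one might expect to furnish a third minimal element already lies in $l_2+S$. (As a check, one has $l_1+l_2=s+t+ab$, reflecting that the defining relation of the curve first contributes in degree $ab$.)
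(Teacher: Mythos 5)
Your proof is correct and follows essentially the same route as the paper's: after translating divisibility into the normal-form coordinates $(i,j)$, you exhibit the same two nongaps (up to the WLOG relabelling of $s$ and $t$) and verify common divisibility and the covering property from the explicit coordinate criterion, the only cosmetic difference being that you split the covering step on the range of $j_c$ where the paper argues by contradiction from $l_1\nmid c$. You also supply an argument for the uniqueness clause, which the paper leaves implicit; your observation that $l_1,l_2$ are the incomparable minimal elements of $(s+S)\cap(t+S)$ is sound and neatly closes that small gap.
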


\begin{proof}
Let $s=as_1+bs_2$ and $t=at_1+bt_2$. Without loss of generality, we may assume $s_1<t_1$ and $s_2>t_2$. Since $s$ divides $a(s_1+b)$, we also have $t_1<s_1+b$. 

Now let $l_1=at_1+bs_2$ and $l_2=a(s_1+b)+bt_2$. It is easily verified that $l_1$ and $l_2$ are divisible by $s$ and $t$. Suppose a nongap $c=ac_1+bc_2$ is divisible by $s$ and $t$. Then $c-s=a(c_1-s_1)+b(c_2-s_2)$ is a nongap as $s$ divides $c$. Note that
\[
	b'(c-s)\bmod a=\begin{cases}
	c_2-s_2 & \text{if $c_2\ge s_2$,}\\
	c_2-s_2+a & \text{if $c_2<s_2$.}
	\end{cases}
\]
Therefore if $c_2\ge s_2$, then $c_1\ge s_1$ while if $c_2<s_2$, then $c_1\ge s_1+b$. Similarly, as $t$ divides $c$, if $c_2\ge t_2$, then $c_1\ge t_1$ while if $c_2<t_2$, then $c_1\ge t_1+b$. So in any case, we have at least $c_1\ge t_1$. Now let us check that $c$ is divisible either by $l_1$ or $l_2$. Assume $l_1$ does not divide $c$. Then $c_1<t_1+b$ as $c_1\ge t_1+b$ contradicts our assumption. Therefore $c_2\ge t_2$. If $c_1<s_1+b$, then $c_2\ge s_2$, which also contradicts the assumption. Therefore $c_1\ge s_1+b$. Then $l_2$ divides $c$. 
\end{proof}

We will call $\phi_{l_1}$ and $\phi_{l_2}$ the lcms of $\phi_s$ and $\phi_t$. In the case when $\phi_s$ divides $\phi_t$, we will call $\phi_t$ the lcm of $\phi_s$ and $\phi_t$. 

Let $B=\set{G_i,F_j}$ be a Gr\"obner basis of a submodule $M$ of $Rz\oplus R$ with respect to $>_s$. We want to compute a Gr\"obner basis of the same module $M$ with respect to $>_{s-1}$ from $B$. Note that while $\LM_{s-1}(G_i)=\LM_s(G_i)\in R$, we may have either $\LM_{s-1}(F_j)=\LM_s(F_j)\in Rz$ or $\LM_{s-1}(F_j)\in R$. Let $\Sigma_s$ and $\Delta_s$ denote the sigma set and the delta set of $M$ with respect to $>_s$, respectively. Observe that 
\[
	\begin{gathered}
	Rz\cap\Sigma_{s-1} \subset Rz\cap\Sigma_s,\quad 
	R\cap\Sigma_{s-1} \supset R\cap\Sigma_s,\\
	Rz\cap\Delta_{s-1} \supset Rz\cap\Delta_s,\quad 
	R\cap\Delta_{s-1} \subset R\cap\Delta_s.
	\end{gathered}
\]
For those $j$ such that $\LM_{s-1}(F_j)=\LM_s(F_j)\in Rz$, define 
\[
	\spoly(F_j)=\set{F_j}.
\]
If $\LM_{s-1}(F_j)\in R\cap\Sigma_s$, then there is an $i$ such that $\LM_s(G_i)|\LM_{s-1}(F_j)$, and then, with one such $i$, define
\[
	\spoly(F_j)=\set{\frac{1}{\LC_{s-1}(F_j)}F_j-\frac{\LM_{s-1}(F_j)}{\LT_s(G_i)}G_i}.
\]
Finally, if $\LM_{s-1}(F_j)\in R\cap\Delta_s$, then define
\[
	\spoly(F_j)=\set{\frac{\psi}{\LT_{s-1}(F_j)}F_j-\frac{\psi}{\LT_s(G_i)}G_i\mid\text{$\psi$ is an lcm of $\LM_{s-1}(F_j)$ and $\LM_s(G_i)$}}.
\]

\begin{proposition}\label{kcmald}
For every $f\in\spoly(F_j)$, $\LM_{s-1}(f)$ is in $Rz$.
\end{proposition}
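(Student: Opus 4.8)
The plan is to settle the first case by inspection and to handle the two remaining cases by a single uniform degree computation. I would write each element of $\spoly(F_j)$ in those two cases in the common form
\[
	f = \frac{\psi}{\LT_{s-1}(F_j)}F_j - \frac{\psi}{\LT_s(G_i)}G_i,
\]
where $\psi$ is a monomial of $R$ divisible by both $\LM_{s-1}(F_j)$ and $\LM_s(G_i)$: when $\LM_{s-1}(F_j)\in R\cap\Sigma_s$ take $\psi=\LM_{s-1}(F_j)$ (so $\psi/\LT_{s-1}(F_j)=1/\LC_{s-1}(F_j)$), and when $\LM_{s-1}(F_j)\in R\cap\Delta_s$ take $\psi$ to be one of the lcms furnished by the previous proposition. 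When instead $\LM_{s-1}(F_j)=\LM_s(F_j)\in Rz$ there is nothing to prove, as $\spoly(F_j)=\set{F_j}$. In the two remaining cases $\LM_{s-1}(F_j)\in R$, so $\LM_{s-1}(F_j)=\LM(F_j^D)$ and $\LC_{s-1}(F_j)=\LC(F_j^D)$, and similarly $\LM_s(G_i)=\LM(G_i^D)$, $\LC_s(G_i)=\LC(G_i^D)$; in particular the monomial quotients $\psi/\LM(F_j^D)$ and $\psi/\LM(G_i^D)$ are defined.

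First I would record two degree relations. Since $\LM_s(F_j)\in Rz$ we have $\gd(F_j^U)+s\ge\gd(F_j^D)$, while $\LM_{s-1}(F_j)\in R$ gives $\gd(F_j^U)+(s-1)<\gd(F_j^D)$; together these force the equality $\gd(F_j^U)+s=\gd(F_j^D)$, so in particular $F_j^U\ne 0$. On the other hand $\LM_s(G_i)\in R$ gives the \emph{strict} inequality $\gd(G_i^U)+s<\gd(G_i^D)$. Now the downstairs part $f^D=\frac{\psi}{\LT_{s-1}(F_j)}F_j^D-\frac{\psi}{\LT_s(G_i)}G_i^D$ has both summands with leading term $\psi$ and leading coefficient $1$, which cancel, so $\gd(f^D)<\gd(\psi)$. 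The upstairs part $f^U=\frac{\psi}{\LT_{s-1}(F_j)}F_j^U-\frac{\psi}{\LT_s(G_i)}G_i^U$ has first summand of degree $\gd(\psi)-\gd(F_j^D)+\gd(F_j^U)=\gd(\psi)-s$, by additivity of $\gd$ under multiplication together with the equality above, and second summand of degree $\gd(\psi)-\gd(G_i^D)+\gd(G_i^U)<\gd(\psi)-s$ by the strict inequality; hence $\gd(f^U)=\gd(\psi)-s$ (and in particular $f\ne 0$). Finally $\gd(f^U)+(s-1)=\gd(\psi)-1\ge\gd(f^D)$, which by the criterion for $\LM_{s-1}$ recalled in Section~\ref{mckwd} is exactly the statement $\LM_{s-1}(f)\in Rz$.

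The one delicate point is the strictness of $\gd(G_i^U)+s<\gd(G_i^D)$: it is precisely here that the tie-breaking rule of $>_s$ (higher $z$-degree wins) is used, and it is this strictness that creates the single unit of slack $\gd(\psi)-1\ge\gd(f^D)$ needed for $f$ to land in $Rz$ after the weight is lowered from $s$ to $s-1$. Everything else — additivity of $\gd$, and the bookkeeping with monomial quotients — is routine.
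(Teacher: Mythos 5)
Your proof is correct and follows essentially the same route as the paper's: both compute $\gd(f^U)=\gd(\psi)-s$ from the forced equality $\gd(F_j^U)+s=\gd(F_j^D)$ together with the strict inequality $\gd(G_i^U)+s<\gd(G_i^D)$, and both use the cancellation of the monic degree-$\gd(\psi)$ terms downstairs to get $\gd(f^D)<\gd(\psi)$, which gives $\gd(f^U)+(s-1)\ge\gd(f^D)$ and hence $\LM_{s-1}(f)\in Rz$. The only cosmetic difference is that you fold the $\LM_{s-1}(F_j)\in R\cap\Sigma_s$ case into the general lcm computation by setting $\psi=\LM_{s-1}(F_j)$ at the outset, whereas the paper runs the computation for the $R\cap\Delta_s$ case and then disposes of the $R\cap\Sigma_s$ case with the closing remark that $\LM_{s-1}(F_j)$ is itself the lcm.
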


\begin{proof}
Recall that $\LM_s(F_j)\in Rz$. Suppose $\LM_{s-1}(F_j)\in R$, and let $\psi$ be an lcm of $\LM_{s-1}(F_j)$ and $\LM_s(G_i)$ for any $i$. Then
\[
	\begin{gathered}
	\gd(\frac{\psi}{\LT_{s-1}(F_j)}F_j^U)=\gd(\psi)-\gd(F_j^D)+\gd(F_j^U)=\gd(\psi)-s,\\
	\gd(\frac{\psi}{\LT_s(G_i)}G_i^U)=\gd(\psi)-\gd(G_i^D)+\gd(G_i^U)<\gd(\psi)-s.
	\end{gathered}	
\]
Therefore
\[
	\gd((\frac{\psi}{\LT_{s-1}(F_j)}F_j-\frac{\psi}{\LT_s(G_i)}G_i)^U)=\gd(\psi)-s.
\]
On the other hand,
\[
	\begin{gathered}
	\gd(\frac{\psi}{\LT_{s-1}(F_j)}F_j^D)=\gd(\psi)-\gd(F_j^D)+\gd(F_j^D)=\gd(\psi),\\
	\gd(\frac{\psi}{\LT_s(G_i)}G_i^D)=\gd(\psi)-\gd(G_i^D)+\gd(G_i^D)=\gd(\psi).
	\end{gathered}	
\]
As the monic terms cancel each other, we have
\[
	\gd((\frac{\psi}{\LT_{s-1}(F_j)}F_j-\frac{\psi}{\LT_s(G_i)}G_i)^D)<\gd(\psi).
\]
Therefore
\[
	\gd((\frac{\psi}{\LT_{s-1}(F_j)}F_j-\frac{\psi}{\LT_s(G_i)}G_i)^U)+s-1\ge\gd((\frac{\psi}{\LT_{s-1}(F_j)}F_j-\frac{\psi}{\LT_s(G_i)}G_i)^D),
\]
and hence
\begin{equation}\label{jfjwa}
	\LM_{s-1}(\frac{\psi}{\LT_{s-1}(F_j)}F_j-\frac{\psi}{\LT_s(G_i)}G_i)=\LM(\frac{\psi}{\LT_{s-1}(F_j)}F_j^U)z\in Rz.
\end{equation}
For the case when $\LM_{s-1}(F_j)\in R\cap\Sigma_s$, notice that $\LM_{s-1}(F_j)$ is the lcm.
\end{proof}

\begin{proposition}\label{qhkxa}
A monomial $\phi$ is in $R\cap\Sigma_{s-1}$ if and only if there exists an $i$ such that $\LM_{s-1}(G_i)|\phi$ or there exists a $j$ such that $\LM_{s-1}(F_j)\in R\cap\Delta_s$ and $\LM_{s-1}(F_j)|\phi$.
\end{proposition}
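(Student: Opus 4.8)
The plan is to prove the two implications directly from the definition of a Gr\"obner basis for $>_s$, the displayed criterion for when $\LM_s(f)$ lands in $Rz$ versus $R$ together with its ``equality case'' (the Remark following that criterion), and the fact that $\gd$ restricts to a bijection from the monomials of $R$ onto the nongaps of $S$. This last fact is what makes the bookkeeping work: a monomial of $Rz\oplus R$ is determined by its $z$-degree and its weight, so whenever two such monomials share both, they coincide. One must keep in mind that the $G_i,F_j$ are module elements, not monomials: writing $G_i=G_i^Uz+G_i^D$, the component $G_i^D$ is a polynomial whose $\gd$-leading monomial is $\LM_{s-1}(G_i)=\LM_s(G_i)$ (which always lies in $R$), and $F_j^D$ has $\gd$-leading monomial $\LM_{s-1}(F_j)$ precisely in the equality case $\gd(F_j^U)+s=\gd(F_j^D)$, i.e.\ when $\LM_{s-1}(F_j)\in R$.

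For the ``if'' direction, suppose $\LM_{s-1}(G_i)\mid\phi$, say $\phi=\gl\,\LM_{s-1}(G_i)$ with $\gl$ a monomial of $R$. Since $\LM_{s-1}(G_i)\in R$ means $\gd(G_i^U)+s-1<\gd(G_i^D)$, the same inequality survives after adding $\gd(\gl)$ to both sides, so $\gl G_i\in M$ has $\LM_{s-1}(\gl G_i)=\LM(\gl G_i^D)$, the unique monomial of $\gd$-value $\gd(\gl)+\gd(G_i^D)=\gd(\phi)$, namely $\phi$. The case where $\LM_{s-1}(F_j)\in R\cap\Delta_s$ and $\LM_{s-1}(F_j)\mid\phi$ is word for word the same with $\gl F_j$ in place of $\gl G_i$, using that $\LM_{s-1}(F_j)\in R$ is the equality case, hence $\gd(F_j^U)+s-1<\gd(F_j^D)$. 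Either way $\phi\in R\cap\Sigma_{s-1}$.

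For the ``only if'' direction, take $\phi\in R\cap\Sigma_{s-1}$ and choose $f\in M$, $f\neq0$, with $\LM_{s-1}(f)=\phi$ and with $\LM_s(f)$ minimal in $>_s$ among all such $f$ (legitimate, as $>_s$ is a well-order on the monomials of $Rz\oplus R$). If $\LM_s(f)\in R$, then $\LM_s(f)=\LM(f^D)=\LM_{s-1}(f)=\phi$; since $B$ is a Gr\"obner basis for $>_s$ and $\phi\in R$, some $\LM_s(G_i)$ divides $\phi$, and $\LM_{s-1}(G_i)=\LM_s(G_i)$ gives the first alternative. If $\LM_s(f)\in Rz$, the equality-case Remark gives $\gd(f^D)=\gd(f^U)+s$ and $\phi=\LM(f^D)$; pick $F_j\in B$ with $\LM_s(F_j)\mid\LM_s(f)$ (such $F_j$ exists since $\LM_s(f)\in Rz$ and $B$ is a Gr\"obner basis for $>_s$), write $\LM(f^U)=\gl\,\LM(F_j^U)$, and consider $h=f-\bigl(\LC(f^U)/\LC(F_j^U)\bigr)\gl F_j\in M$. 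Routine degree counting gives $\gd(h^U)<\gd(f^U)$ and $\gd(\gl F_j^D)\le\gd(\gl)+\gd(F_j^U)+s=\gd(f^D)$. If that last inequality were strict, then $\LM_{s-1}(h)=\LM(f^D)=\phi$ while $\LM_s(h)=\phi$, which is strictly $>_s$-below $\LM_s(f)=\LM(f^U)z$ (same weight, smaller $z$-degree), contradicting minimality. Hence $\gd(\gl F_j^D)=\gd(f^D)$, which forces $\gd(F_j^D)=\gd(F_j^U)+s$; thus $\LM_{s-1}(F_j)=\LM(F_j^D)\in R$, and since $\gd$ is injective on monomials, $\LM(\gl F_j^D)=\LM(f^D)=\phi$, so $\LM_{s-1}(F_j)\mid\phi$. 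Finally, if $\LM_{s-1}(F_j)\in\Sigma_s$ then some $\LM_s(G_i)$ divides it, hence divides $\phi$ (first alternative again); otherwise $\LM_{s-1}(F_j)\in R\cap\Delta_s$ (second alternative).

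The step I expect to need the most care is the reduction in the last paragraph: one must verify that passing from $f$ to $h$ strictly lowers $\LM_s$ while keeping $\LM_{s-1}=\phi$, and this works only because the tie-break in $>_s$ ranks the downstairs monomial $\phi$, of weight $\gd(f^U)+s$, below the upstairs monomial $\LM(f^U)z$ of the very same weight. Once that point is secured, the rest is the same $\gd$-arithmetic already carried out in the proof of Proposition~\ref{kcmald}.
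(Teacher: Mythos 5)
Your proof is correct and follows essentially the same route as the paper's: both first dispose of the case $\phi\in R\cap\Sigma_s$ via the $G_i$'s and then, for the remaining case, subtract a suitable multiple of $F_j$ from $f$ to force the equality $\gd(F_j^U)+s=\gd(F_j^D)$ by contradiction. The only cosmetic difference is that you frame the contradiction as violating minimality of $\LM_s(f)$, whereas the paper frames it as violating $\phi\notin\Sigma_s$; since $\phi$ is the $>_s$-smallest possible value of $\LM_s(f)$ among $f$ with $\LM_{s-1}(f)=\phi$, these are the same contradiction.
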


\begin{proof}
Both $\LM_{s-1}(G_i)|\phi$ and $\LM_{s-1}(F_j)|\phi$ imply $\phi\in R\cap\Sigma_{s-1}$. Let us show the converse. If $\phi\in R\cap\Sigma_s$, then $\LM_{s}(G_i)|\phi$ for some $i$, and therefore $\LM_{s-1}(G_i)|\phi$. 
As $R\cap\Sigma_{s-1}\supset R\cap \Sigma_s$, it remains to consider the case when $\phi\in R\cap (\Sigma_{s-1}\backslash\Sigma_s)$. 

Suppose $f\in M$ is such that $\phi=\LM_{s-1}(f)\in R\cap (\Sigma_{s-1}\backslash\Sigma_s)$. Since $\phi\notin R\cap\Sigma_s$, we must have $\LM_s(f)\in Rz$, and hence 
\[
	\gd(f^U)+s=\gd(f^D)=\gd(\phi).
\]
Then $\LM_s(F_j)|\LM_s(f)$ for some $j$. As $\LM_s(F_j)\in Rz$, we have $\gd(F_j^U)+s \ge \gd(F_j^D)$, where actually equality holds as we will show now. Assume the contrary, that is,
\[
	\gd(F_j^U)+s>\gd(F_j^D). 
\]
Then 
\[
	\begin{gathered}
	\gd(\frac{\LT_s(f)}{\LT_s(F_j)}F_j^D)=\gd(f^U)-\gd(F_j^U)+\gd(F_j^D)<\gd(f^U)+s=\gd(f^D),\\
	\gd(\frac{\LT_s(f)}{\LT_s(F_j)}F_j^U)=\gd(f^U)-\gd(F_j^U)+\gd(F_j^U)=\gd(f^U).
	\end{gathered}
\]
These imply
\[
	\LM_s(f-\frac{\LT_s(f)}{\LT_s(F_j)}F)=\LM(f^D)=\LM_{s-1}(f)=\phi,
\]
contradictory to the assumption $\phi\notin R\cap\Sigma_s$. Hence $\gd(F_j^U)+s=\gd(F_j^D)$, and
\[
	\gd(\frac{\LM_s(f)}{\LM_s(F_j)}\LM_{s-1}(F_j))
	=\gd(f^U)-\gd(F_j^U)+\gd(F_j^D)=\gd(f^U)+s=\gd(\phi).
\]
Therefore $\LM_{s-1}(F_j)|\phi$, and $\LM_{s-1}(F_j)\in R\cap\Delta_s$.
\end{proof}

\begin{proposition}
A monomial $\phi$ is in $Rz\cap\Sigma_{s-1}$ if and only if there exists a $j$ such that $\LM_{s-1}(f)|\phi$ for some $f\in\spoly(F_j)$.
\end{proposition}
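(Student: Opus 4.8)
The plan is to mirror the structure of Proposition~\ref{qhkxa}, this time for the part of the sigma set lying in $Rz$. The ``if'' direction is immediate: every $f\in\spoly(F_j)$ belongs to $M$, so $\LM_{s-1}(f)\in\Sigma_{s-1}$, and by Proposition~\ref{kcmald} we know $\LM_{s-1}(f)\in Rz$; hence $\LM_{s-1}(f)\mid\phi$ forces $\phi\in Rz\cap\Sigma_{s-1}$. So the real content is the ``only if'' direction.

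For the converse, I would start from $\phi=\LM_{s-1}(f)\in Rz\cap\Sigma_{s-1}$ for some $f\in M$. Since $Rz\cap\Sigma_{s-1}\subset Rz\cap\Sigma_s$ (one of the containments recorded just before the $\spoly$ definition), we have $\phi\in Rz\cap\Sigma_s$ as well, so there is some $F_j$ with $\LM_s(F_j)\mid\phi$ in $Rz$; equivalently $\LM(F_j^U)\mid\LM(f^U)$. Now split into the three cases used to define $\spoly(F_j)$. If $\LM_{s-1}(F_j)=\LM_s(F_j)\in Rz$, then $\spoly(F_j)=\set{F_j}$ and $\LM_{s-1}(F_j)\mid\phi$ directly, so we are done. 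The remaining two cases are where $\LM_{s-1}(F_j)\in R$, i.e.\ $\gd(F_j^U)+s-1<\gd(F_j^D)$, while $\gd(F_j^U)+s\ge\gd(F_j^D)$; combining these forces the equality $\gd(F_j^U)+s=\gd(F_j^D)$, so $\LM_{s-1}(F_j)=\LM(F_j^D)$ and $\gd(\LM_{s-1}(F_j))=\gd(F_j^U)+s$.

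In the case $\LM_{s-1}(F_j)\in R\cap\Sigma_s$, the single element $g\in\spoly(F_j)$ is formed using a $G_i$ with $\LM_s(G_i)\mid\LM_{s-1}(F_j)$; as remarked at the end of the proof of Proposition~\ref{kcmald}, here $\LM_{s-1}(F_j)$ itself plays the role of the lcm $\psi$, so by \eqref{jfjwa} we get $\LM_{s-1}(g)=\LM\bigl(\tfrac{\LM_{s-1}(F_j)}{\LT_{s-1}(F_j)}F_j^U\bigr)z$, a monomial of $z$-degree one whose $R$-part has $\gd$ equal to $\gd(\LM_{s-1}(F_j))-\gd(F_j^D)+\gd(F_j^U)=\gd(F_j^U)$ — i.e.\ $\LM_{s-1}(g)=\LM(F_j^U)z=\LM_s(F_j)$, which divides $\phi$. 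In the case $\LM_{s-1}(F_j)\in R\cap\Delta_s$, I want to choose the right element of $\spoly(F_j)$: I claim the element built from the lcm $\psi$ of $\LM_{s-1}(F_j)$ and $\LM_s(G_i)$ that divides $\LM(f^D)$ does the job. The key computation is that for this choice, by \eqref{jfjwa} again, $\LM_{s-1}(g)=\LM\bigl(\tfrac{\psi}{\LT_{s-1}(F_j)}F_j^U\bigr)z$, whose $z$-part has $\gd$ equal to $\gd(\psi)-\gd(F_j^D)+\gd(F_j^U)=\gd(\psi)-s$; comparing with $\gd(\phi)-s=\gd(f^U)=\gd(\LM(f^D))-s$ and using $\psi\mid\LM(f^D)$ (so $\LM(f^D)/\psi$ is a monomial of $R$), one concludes $\LM_{s-1}(g)\mid\LM(f^U)z=\phi$.

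The main obstacle I anticipate is the $R\cap\Delta_s$ case: one must verify that among the finitely many lcms of $\LM_{s-1}(F_j)$ and $\LM_s(G_i)$ (there are at most two, by the lcm Proposition), at least one actually divides $\LM(f^D)$. This is exactly where the lcm Proposition is used: since $\LM(f^D)$ is a common multiple of $\LM_{s-1}(F_j)$ and — via a reduction step showing $\LM_s(G_i)\mid\LM(f^D)$ for a suitable $i$ (because $f$ reduced modulo the $F$'s and $G$'s lands in $\Delta$, or else a further $G_i$ reduction applies) — also of $\LM_s(G_i)$, one of the two lcms $l_1,l_2$ must divide it. Pinning down that the relevant $G_i$ is the same one occurring in the $\spoly(F_j)$ definition, and handling the subcase where no $G_i$ is involved because $\LM_s(f)$ was already reducible purely by $F_j$, is the fiddly part; everything else is a bookkeeping of $\gd$-values of the ``upstairs'' and ``downstairs'' components exactly as in the two preceding propositions.
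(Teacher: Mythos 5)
Your overall structure mirrors the paper's: the ``if'' direction is immediate from Proposition~\ref{kcmald}; for ``only if'' you note $\phi=\LM_{s-1}(f)\in Rz\cap\Sigma_{s-1}\subset Rz\cap\Sigma_s$ and pick $F_j$ with $\LM_s(F_j)\mid\phi$, then split according to the three cases of the $\spoly(F_j)$ definition. Cases one and two are handled as in the paper. The gap is in the third case, $\LM_{s-1}(F_j)\in R\cap\Delta_s$. You base the argument on the monomial $\LM(f^D)$ together with the purported identity $\gd(f^U)=\gd(\LM(f^D))-s$. But since $\phi=\LM_{s-1}(f)\in Rz$ you have $\gd(f^U)+(s-1)\ge\gd(f^D)$, i.e.\ $\gd(f^U)+s>\gd(f^D)$ strictly, so $\gd(f^D)-s<\gd(f^U)$ and the identity fails (and $f^D$ could even vanish). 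For the same reason there is no guarantee that $\LM(f^D)$ is a common multiple of $\LM_{s-1}(F_j)$ and any $\LM_s(G_i)$ — $\gd(f^D)-\gd(F_j^D)$ need not be a nongap — so no lcm $\psi$ need divide $\LM(f^D)$, and the choice of $\spoly$ element you propose is not available.

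The monomial the paper uses instead is the $>_s$-leading monomial of the reduced element $f-\frac{\LT_s(f)}{\LT_s(F_j)}F_j\in M$. Because $\gd(F_j^U)+s=\gd(F_j^D)$ (which you did deduce correctly), the subtraction cancels the upstairs leading term while the dominant downstairs term comes from $F_j^D$, giving
\[
	\LM_s\Bigl(f-\frac{\LT_s(f)}{\LT_s(F_j)}F_j\Bigr)=\frac{\LM_s(f)}{\LM_s(F_j)}\LM_{s-1}(F_j)\in R,
\]
a monomial of $\gd$-value $\gd(f^U)+s$. This monomial lies in $R\cap\Sigma_s$, hence is divisible by some $\LM_s(G_i)$, and it is divisible by $\LM_{s-1}(F_j)$ by construction; so the lcm Proposition supplies an lcm $\psi$ dividing it, and then $\gd(f^U)+s-\gd(\psi)\in S$, which is exactly what you need to conclude $\LM_{s-1}(g)\mid\phi$ for the corresponding $g\in\spoly(F_j)$. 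Replacing $\LM(f^D)$ by $\frac{\LM_s(f)}{\LM_s(F_j)}\LM_{s-1}(F_j)$ throughout your third case repairs the argument and resolves the ``fiddly part'' you flagged.
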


\begin{proof}
By Proposition \ref{kcmald}, the converse is clear. Let us assume $\phi\in Rz\cap\Sigma_{s-1}$. Suppose $\phi=\LM_{s-1}(f)$ for some $f\in M$. Then $\phi=\LM_s(f)$, and there exists some $j$ such that $\LM_s(F_j)|\phi$. 
If $\LM_{s-1}(F_j)\in Rz$, then $F_j\in\spoly(F_j)$ and $\LM_{s-1}(F_j)=\LM_s(F_j)|\phi$. 

Suppose $\LM_{s-1}(F_j)\in R\cap\Sigma_s$. Then there is an $i$ such that $\LM_s(G_i)|\LM_{s-1}(F_j)$ and 
\[
	\frac{1}{\LC_{s-1}(F_j)}F_j-\frac{\LM_{s-1}(F_j)}{\LT_s(G_i)}G_i\in\spoly(F_j)
\]
and by \eqref{jfjwa},
\[
	\LM_{s-1}(\frac{1}{\LC_{s-1}(F_j)}F_j-\frac{\LM_{s-1}(F_j)}{\LT_s(G_i)}G_i)=\LM_s(F_j)|\phi.
\]

Suppose $\LM_{s-1}(F_j)\in R\cap\Delta_s$. Note that
\[
	\gd(f^U)+s>\gd(f^D),\quad \gd(F_j^U)+s=\gd(F_j^D),
\]
and hence
\[
	\begin{gathered}
	\gd(\frac{\LT_s(f)}{\LT_s(F_j)}F_j^U)=\gd(f^U)-\gd(F_j^U)+\gd(F_j^U)=\gd(f^U),\\
	\gd(\frac{\LT_s(f)}{\LT_s(F_j)}F_j^D)=\gd(f^U)-\gd(F_j^U)+\gd(F_j^D)=\gd(f^U)+s>\gd(f^D).
	\end{gathered}	
\]
Thus we see that
\[
	\LM_s(f-\frac{\LT_s(f)}{\LT_s(F_j)}F_j)=\frac{\LM_s(f)}{\LM_s(F_j)}\LM_{s-1}(F_j)\in R
\]
and hence there is an $i$ such that 
\[
	\LM_s(G_i)|\frac{\LM_s(f)}{\LM_s(F_j)}\LM_{s-1}(F_j).
\]
Now there is an lcm $\psi$ of $\LM_{s-1}(F_j)$ and $\LM_s(G_i)$ such that
\begin{equation}\label{dkwda}
	\psi | \frac{\LM_s(f)}{\LM_s(F_j)}\LM_{s-1}(F_j),
\end{equation}
and
\[
	\frac{\psi}{\LT_{s-1}(F_j)}F_j-\frac{\psi}{\LT_s(G_i)}G_i\in\spoly(F_j).
\]
By \eqref{jfjwa},
\[
	\LM_{s-1}(\frac{\psi}{\LT_{s-1}(F_j)}F_j-\frac{\psi}{\LT_s(G_i)}G_i)
	=\frac{\psi}{\LM_{s-1}(F_j)}\LM_s(F_j)\in Rz
\]
and finally from \eqref{dkwda},
\[
	\frac{\psi}{\LM_{s-1}(F_j)}\LM_s(F_j)|\LM_s(f)=\phi.
\]
\end{proof}

Combining the above results, we see that the set
\[
\set{G_i,F_j\mid \LM_{s-1}(F_j)\in R\cap\Delta_s}\cup\bigcup_{j}\spoly(F_j)
\]
is a Gr\"obner basis of $M$ with respect to $>_{s-1}$. In general, the Gr\"obner basis may contain more polynomials than necessary. Indeed, we can reduce each set in the union by removing polynomials whose leading term is divisible by that of other polynomial in the same set. We will denote the \emph{reduced} Gr\"obner basis of $M$ with respect to $>_{s-1}$ by
 \[
\set{G_i,F_j\mid \LM_{s-1}(F_j)\in R\cap\Delta_s}'\cup\bigcup_{j}'\spoly(F_j).
\]

\section{Message Guessing}\label{kleflso}

The ideal of the error vector $e$ defined by
\[
	J_e=\bigcap_{e_i\neq 0}\frakm_i
\]
has a Gr\"obner basis $\set{\epsilon_i}$ with respect to $>$, and 
\begin{equation}\label{mdcad}
	\dim_\F R/J_e=|\Delta(J_e)|=\wt(e).
\end{equation}
Recall that $B^{(s)}=\set{G_i,F_j}$ is a Gr\"obner basis of $I_{v^{(s)}}$ with respect to $>_s$. Observe that $J_e(z-\mu^{(s)})\subset I_{v^{(s)}}$, which result in $\Sigma(J_e)z\subset\Sigma_s(I_{v^{(s)}})\cap Rz$, and hence $\Delta_s(I_{v^{(s)}})\cap Rz\subset\Delta(J_e)z$. Therefore
\[
	|\Delta_s(I_{v^{(s)}})\cap Rz|=|\Delta(F_j^U)|\le\wt(e).
\]
Now let $s$ be a nongap $\le u$. Let us consider the module
\[
	\tilde{I}_w=\set{f(z+w\phi_s)\mid f\in I_{v^{(s)}}}\subset Rz\oplus R.
\]
for $w\in\F$. Note that
\[
	\tilde{B}=\set{G_i(z+w\phi_s),F_j(z+w\phi_s)}
\]
is a Gr\"obner basis of $\tilde{I}_w$ with respect to $>_s$ since $\LM_s(f(z+w\phi_s))=\LM_s(f)$ for all $f\in I_{v^{(s)}}$. For the same reason, 
\[
	\Sigma_s(\tilde{I}_w)=\Sigma_s(I_{v^{(s)}}),\quad \Delta_s(\tilde{I}_w)=\Delta_s(I_{v^{(s)}}).
\]
Observe that $\tilde{I}_{\gw_s}=I_{v^{(s-1)}}$. Hence 
\begin{equation}\label{xjakd}
	|\Delta_{s-1}(\tilde{I}_{\gw_s})\cap Rz|\le\wt(e).
\end{equation}

In Theorem \ref{mxmjw} below, we will see that $\gw_s$ is such a $w$ that makes the value
\[
	|\Delta_{s-1}(\tilde{I}_w)\cap Rz|
\]
smallest, provided that $\wt(e)$ is not too large. First note that
\[
	\begin{gathered}
	|\Delta_{s-1}(\tilde{I}_w)\cap Rz|+|\Delta_{s-1}(\tilde{I}_w)\cap R|=|\Delta_{s-1}(\tilde{I}_w)|=n,\\
	|\Delta_s(\tilde{I}_w)\cap R|+|\Delta_s(\tilde{I}_w)\cap Rz|=|\Delta_s(\tilde{I}_w)|=n.
	\end{gathered}
\]

\begin{lemma}
For $w\neq \gw_s$,
\[
	|\Delta_{s-1}(\tilde{I}_w)\cap Rz|\ge n-|\Delta(J_e\phi_s)\cap\Delta(J)|.
\]
\end{lemma}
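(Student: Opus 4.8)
The plan is to recast the claimed inequality in its complementary form. The displayed identities just before the lemma give $|\Delta_{s-1}(\tilde{I}_w)\cap Rz| = n-|\Delta_{s-1}(\tilde{I}_w)\cap R|$, so it suffices to prove $|\Delta_{s-1}(\tilde{I}_w)\cap R|\le |\Delta(J_e\phi_s)\cap\Delta(J)|$. For this I will exhibit enough elements of $\tilde{I}_w$ whose leading monomial with respect to $>_{s-1}$ lies in $R$; concretely, I aim to show that both monomial sets $\Sigma(J_e\phi_s)$ and $\Sigma(J)$ are contained in $R\cap\Sigma_{s-1}(\tilde{I}_w)$. Complementing inside the set of monomials of $R$ then yields $\Delta_{s-1}(\tilde{I}_w)\cap R\subseteq\Delta(J_e\phi_s)\cap\Delta(J)$, which is exactly the bound needed. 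The elements producing these leading monomials come from pushing the two $R$-submodules $J_e(z-\mu^{(s)})$ and $J$ of $I_{v^{(s)}}$ through the shift $z\mapsto z+w\phi_s$ that defines $\tilde{I}_w$.

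For the contribution of $J_e$, fix $0\ne\xi\in J_e$. Since $J_e(z-\mu^{(s)})\subset I_{v^{(s)}}$, the element $f_\xi:=\xi(z+w\phi_s-\mu^{(s)})$ lies in $\tilde{I}_w$, with $f_\xi^U=\xi$ and $f_\xi^D=\xi(w\phi_s-\mu^{(s)})$. Because $s$ is a nongap $\le u$, we have $\mu^{(s)}=\mu^{(s-1)}+\gw_s\phi_s$ with $\gd(\mu^{(s-1)})<s$, so for $w\ne\gw_s$ the function $w\phi_s-\mu^{(s)}=(w-\gw_s)\phi_s-\mu^{(s-1)}$ has $\gd$ equal to $s$ and leading monomial $\phi_s$. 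Hence $\gd(f_\xi^U)+s=\gd(\xi)+s=\gd(f_\xi^D)$, and by the remark in Section~\ref{mckwd} that $\gd(f^U)+s=\gd(f^D)$ holds precisely when $\LM_s(f)\in Rz$ and $\LM_{s-1}(f)\in R$, this forces $\LM_{s-1}(f_\xi)\in R$; moreover $\LM_{s-1}(f_\xi)=\LM(f_\xi^D)=\LM(\xi)\phi_s$. As $\xi$ runs over the nonzero elements of $J_e$, $\LM(\xi)$ runs over $\Sigma(J_e)$, so $\Sigma(J_e\phi_s)=\{\LM(\xi)\phi_s\mid 0\ne\xi\in J_e\}\subseteq R\cap\Sigma_{s-1}(\tilde{I}_w)$.

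For the contribution of $J$, observe that every $\xi\in J$ vanishes at all the $P_i$, hence $\xi\in I_{v^{(s)}}$ as an element of $R$, and $\xi$ is unchanged by $z\mapsto z+w\phi_s$; thus $J\subseteq\tilde{I}_w$. Each nonzero $\xi\in J$ has $\LM_{s-1}(\xi)=\LM(\xi)\in R$, so $\Sigma(J)\subseteq R\cap\Sigma_{s-1}(\tilde{I}_w)$. Combining this with the previous paragraph and complementing inside the monomials of $R$ gives $\Delta_{s-1}(\tilde{I}_w)\cap R\subseteq\Delta(J_e\phi_s)\cap\Delta(J)$, and therefore $|\Delta_{s-1}(\tilde{I}_w)\cap Rz|=n-|\Delta_{s-1}(\tilde{I}_w)\cap R|\ge n-|\Delta(J_e\phi_s)\cap\Delta(J)|$.

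The only subtle step is the exact equality $\gd(f_\xi^U)+s=\gd(f_\xi^D)$: it is precisely what drags $\LM_{s-1}(f_\xi)$ down into $R$ instead of leaving it in $Rz$, and it is exactly here that the hypothesis $w\ne\gw_s$ enters, through $w-\gw_s\ne 0$ together with $\gd(\mu^{(s-1)})<s$. The remaining manipulations are routine bookkeeping with the weighted degree and with the standing fact that $Rz\oplus R/\tilde{I}_w$ has $\F$-dimension $n$.
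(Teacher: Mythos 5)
Your proof is correct and follows essentially the same approach as the paper: both observe that $J_e(z-(\gw_s-w)\phi_s-\mu^{(s-1)})$ and $J$ lie inside $\tilde{I}_w$, deduce $\Sigma(J_e\phi_s)\cup\Sigma(J)\subset\Sigma_{s-1}(\tilde{I}_w)\cap R$, and then complement and count. You simply spell out the degree computation $\gd(f_\xi^U)+s=\gd(f_\xi^D)$ (where $w\neq\gw_s$ enters) that the paper leaves implicit.
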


\begin{proof}	
Observe that $J_e(z-(\gw_s-w)\phi_s-\mu^{(s-1)})\subset\tilde{I}_w$ and $J\subset\tilde{I}_w$. Therefore $\Sigma(J_e\phi_s)\cup\Sigma(J)\subset\Sigma_{s-1}(\tilde{I}_w)\cap R$, that is 
\[
	\Delta_{s-1}(\tilde{I}_w)\cap R\subset\Delta(J_e\phi_s)\cap\Delta(J).
\]
Hence $|\Delta_{s-1}(\tilde{I}_w)\cap R|\le |\Delta(J_e\phi_s)\cap\Delta(J)|$, equivalent to the second equality.
\end{proof}

\begin{lemma}
$|\Delta(J_e\phi_s)|=\wt(e)+s$.
\end{lemma}

\begin{proof}
Note that
\[
	\begin{split}
	|\Delta(J_e\phi_s)|&=|\Sigma(R)\backslash\Sigma(J_e\phi_s)|=|\Delta(J_e)|+|\Sigma(R)\backslash\Sigma(R\phi_s)|\\
	&=\wt(e)+|S\backslash(s+S)|=\wt(e)+s.
	\end{split}
\]
The equality $|S\backslash(s+S)|=s$ holds for any numerical semigroup and can be proved by induction on the Frobenius number.
\end{proof}

\begin{theorem}\label{mxmjw}
The value $|\Delta_{s-1}(\tilde{I}_w)\cap Rz|$ is smallest for $w=\gw_s$, provided that
\[
	|\Delta(J)\cup\Delta(R\phi_s)|-s>2\wt(e).
\]
\end{theorem}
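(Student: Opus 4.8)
The plan is to separate the analysis into two cases according to whether $w=\gw_s$ or $w\neq\gw_s$, and show that the latter always yields a strictly larger value of $|\Delta_{s-1}(\tilde I_w)\cap Rz|$ under the stated hypothesis. For $w=\gw_s$ we already know from \eqref{xjakd} that $|\Delta_{s-1}(\tilde I_{\gw_s})\cap Rz|\le\wt(e)$. So it suffices to show that for every $w\neq\gw_s$ the quantity $|\Delta_{s-1}(\tilde I_w)\cap Rz|$ strictly exceeds $\wt(e)$, which will follow if $n-|\Delta(J_e\phi_s)\cap\Delta(J)|>\wt(e)$ by the first Lemma above.

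First I would bound $|\Delta(J_e\phi_s)\cap\Delta(J)|$ from above using inclusion–exclusion: $|\Delta(J_e\phi_s)\cap\Delta(J)| = |\Delta(J_e\phi_s)| + |\Delta(J)| - |\Delta(J_e\phi_s)\cup\Delta(J)|$. Here $|\Delta(J)|=n$ by \eqref{camdzs} and $|\Delta(J_e\phi_s)|=\wt(e)+s$ by the second Lemma. The union $\Delta(J_e\phi_s)\cup\Delta(J)$ should be bounded below by noting that $\Delta(J_e\phi_s)\supset\Delta(R\phi_s)$ — since $J_e\phi_s\subset R\phi_s$ forces $\Sigma(J_e\phi_s)\subset\Sigma(R\phi_s)$, hence $\Delta(R\phi_s)\subset\Delta(J_e\phi_s)$ — so $\Delta(J_e\phi_s)\cup\Delta(J)\supset\Delta(R\phi_s)\cup\Delta(J)$, giving $|\Delta(J_e\phi_s)\cup\Delta(J)|\ge|\Delta(R\phi_s)\cup\Delta(J)|$. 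Combining these three facts:
\[
	|\Delta(J_e\phi_s)\cap\Delta(J)|\le (\wt(e)+s)+n-|\Delta(R\phi_s)\cup\Delta(J)|.
\]
Therefore $n-|\Delta(J_e\phi_s)\cap\Delta(J)|\ge |\Delta(R\phi_s)\cup\Delta(J)|-s-\wt(e)$, and the hypothesis $|\Delta(J)\cup\Delta(R\phi_s)|-s>2\wt(e)$ gives $n-|\Delta(J_e\phi_s)\cap\Delta(J)|>2\wt(e)-\wt(e)=\wt(e)$. Feeding this into the first Lemma yields $|\Delta_{s-1}(\tilde I_w)\cap Rz|\ge n-|\Delta(J_e\phi_s)\cap\Delta(J)|>\wt(e)\ge|\Delta_{s-1}(\tilde I_{\gw_s})\cap Rz|$, completing the argument.

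The step I expect to be the main obstacle is justifying the inclusion $\Delta(R\phi_s)\subset\Delta(J_e\phi_s)$ cleanly, and more generally making sure the sigma/delta-set inclusions for the rescaled ideals $J_e\phi_s$ and $R\phi_s$ are stated correctly with respect to the order $>$ on $R$ — one must check that multiplication by the fixed monomial $\phi_s$ interacts with the monomial order so that $\Sigma(I\phi_s)=s+\Sigma(I)$ for an ideal $I$ in the natural sense, which is where the proof of the second Lemma already implicitly relies. Once those bookkeeping facts about delta sets under the $\phi_s$-shift are in hand, the inclusion–exclusion estimate and the final chain of inequalities are routine. A secondary point to verify is that the inequality from \eqref{xjakd} and the strict inequality for $w\neq\gw_s$ together actually pin down the minimizer uniquely as $\gw_s$ (strictness is essential), which is exactly what the hypothesis delivers.
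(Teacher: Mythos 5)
Your proof is correct and follows essentially the same route as the paper's: reduce to showing $n-|\Delta(J)\cap\Delta(J_e\phi_s)|>\wt(e)$ via the two preliminary lemmas and \eqref{xjakd}, apply inclusion--exclusion together with $|\Delta(J)|=n$ and $|\Delta(J_e\phi_s)|=\wt(e)+s$, and finally compare $|\Delta(J)\cup\Delta(J_e\phi_s)|\ge|\Delta(J)\cup\Delta(R\phi_s)|$. Your justification of that last inclusion via $\Sigma(J_e\phi_s)\subset\Sigma(R\phi_s)$ is a correct and slightly more explicit version of what the paper leaves as a remark; the bookkeeping concerns you flag at the end are already covered by the cited lemmas.
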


\begin{proof}
We need to show that for $w\neq\gw_s$,
\[
	|\Delta_{s-1}(\tilde{I}_w)\cap Rz|>|\Delta_{s-1}(\tilde{I}_{\gw_s})\cap Rz|.
\]
By \eqref{xjakd} and the previous lemmas, a sufficient condition for the above is
\[
	\begin{split}
	&n-|\Delta(J)\cap\Delta(J_e\phi_s)|>\wt(e)\\
	&\iff n-|\Delta(J)|-|\Delta(J_e\phi_s)|+|\Delta(J)\cup\Delta(J_e\phi_s)|>\wt(e)\\
	&\iff |\Delta(J)\cup\Delta(J_e\phi_s)|-s>2\wt(e)
	\end{split}	
\]
since $|\Delta(J)|=n$. Finally note that $|\Delta(J)\cup\Delta(J_e\phi_s)| \ge |\Delta(J)\cup\Delta(R\phi_s)|$.
\end{proof}

Note that $|\Delta_{s-1}(\tilde{I}_w)\cap Rz|$ is smallest when so is
\[
	\begin{split}
	|\Delta_{s-1}(\tilde{I}_w)\cap Rz|-|\Delta_s(\tilde{I}_w)\cap Rz|
	&=|\Delta_s(\tilde{I}_w)\cap R|-|\Delta_{s-1}(\tilde{I}_w)\cap R|\\
	&=|(\Delta_s(\tilde{I}_w)\backslash\Delta_{s-1}(\tilde{I}_w))\cap R)|\\
	&=|\Sigma_{s-1}(\tilde{I}_w)\cap\Delta_s(\tilde{I}_w)\cap R|.
	\end{split}
\]
since $|\Delta_s(\tilde{I}_w)\cap Rz|=|\Delta_s(I_{v^{(s)}})\cap Rz|$ is independent of $w$. The value
\[
	|\Sigma_{s-1}(\tilde{I}_w)\cap\Delta_s(\tilde{I}_w)\cap R|
\]
can be computed using the Gr\"obner bases of $\tilde{I}_w$ with respect to $>_s$ and $>_{s-1}$. As we saw in Section \ref{jksadde}, the Gr\"obner basis of $\tilde{I}_w$ with respect to $>_{s-1}$ is determined from $\tilde{B}$, the Gr\"obner bases of $\tilde{I}_w$ with respect to $>_s$. Precisely, according to Proposition \ref{qhkxa}, the set 
\[
	\Sigma_{s-1}(\tilde{I}_w)\cap\Delta_s(\tilde{I}_w)\cap R
\]
is determined by $\LM_{s-1}(F_j(z+w\phi_s))$ that lies in $\Delta_s(\tilde{I}_w)\cap R$. We note that for each $j$, there is a unique $w_j\in\F$ such that
\[
	\LM_{s-1}(F_j(z+w_j\phi_s))\in Rz,
\]
and $\LM_{s-1}(F_j(z+w\phi_s))=\LM(F_j^U\phi_s)\in R$ if and only if $w\neq w_j$. In fact, 
\[
	w_j=-\frac{d}{\LC(F_j^U)},
\]
where $d$ is the coefficient of the monomial $\LM(F_j^U\phi_s)$ in $F_j^D$.

\begin{proposition}\label{kaqqe}
\[
	\begin{split}
	\Sigma_{s-1}(\tilde{I}_w)\cap\Delta_s(\tilde{I}_w)\cap R
	&=\bigcup_{w_j\neq w}\Sigma_{s-1}(F_j(z+w\phi_s))\cap\Delta_s(\tilde{I}_w)\\
	&=\bigsqcup_{c\neq w}\bigcup_{w_j=c}\Sigma_{s-1}(F_j(z+w\phi_s))\cap\Delta_s(\tilde{I}_w)	
	\end{split}
\]
where $\sqcup$ denotes disjoint union.
\end{proposition}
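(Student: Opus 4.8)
The plan is to obtain both equalities directly from Proposition~\ref{qhkxa}, applied to the module $\tilde{I}_w$ together with its Gr\"obner basis $\tilde{B}=\set{G_i(z+w\phi_s),F_j(z+w\phi_s)}$ with respect to $>_s$. For the first equality, Proposition~\ref{qhkxa} says that $\phi\in R\cap\Sigma_{s-1}(\tilde{I}_w)$ if and only if $\phi$ is divisible by $\LM_{s-1}(G_i(z+w\phi_s))$ for some $i$, or by $\LM_{s-1}(F_j(z+w\phi_s))$ for some $j$ with $\LM_{s-1}(F_j(z+w\phi_s))\in R\cap\Delta_s(\tilde{I}_w)$. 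I would then intersect this description with $\Delta_s(\tilde{I}_w)$. The $G_i$-terms drop out, because $\LM_{s-1}(G_i(z+w\phi_s))=\LM_s(G_i(z+w\phi_s))$ already lies in $\Sigma_s(\tilde{I}_w)$, and $\Sigma_s(\tilde{I}_w)$ is closed upward under divisibility of monomials (being the set of leading terms of an $R$-module), so every multiple of it avoids $\Delta_s(\tilde{I}_w)$. For an index $j$, recall that $\LM_{s-1}(F_j(z+w\phi_s))\in R$ exactly when $w\neq w_j$, in which case it equals $\LM(F_j^U\phi_s)$ and $\Sigma_{s-1}(F_j(z+w\phi_s))=\set{\phi: \LM(F_j^U\phi_s)\mid\phi}$; and if $w\neq w_j$ but $\LM(F_j^U\phi_s)\notin\Delta_s(\tilde{I}_w)$, then $\Sigma_{s-1}(F_j(z+w\phi_s))\subseteq\Sigma_s(\tilde{I}_w)$ meets $\Delta_s(\tilde{I}_w)$ emptily. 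Hence the union may be widened to run over all $j$ with $w_j\neq w$, which is the first equality.

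The rewriting of $\bigcup_{w_j\neq w}$ as $\bigsqcup_{c\neq w}\bigcup_{w_j=c}$ is purely formal; the real content — and the step I expect to be the main obstacle — is the disjointness of the groups indexed by distinct values $c$. I would prove it by contradiction: a monomial $\phi$ belonging to two different groups would produce a nonzero element of $\tilde{I}_w$ whose $>_s$-leading monomial is $\phi$ itself, contradicting $\phi\in\Delta_s(\tilde{I}_w)$. In detail, suppose $\gd(\phi)=\delta$, $\phi\in\Delta_s(\tilde{I}_w)$, and $\LM(F_j^U\phi_s)\mid\phi$, $\LM(F_{j'}^U\phi_s)\mid\phi$ for indices with $w_j=c\neq c'=w_{j'}$, both $\neq w$. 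Set $g_j=\phi_{\delta-\gd(F_j^U)-s}\,F_j(z+w\phi_s)$ and $g_{j'}=\phi_{\delta-\gd(F_{j'}^U)-s}\,F_{j'}(z+w\phi_s)$ in $\tilde{I}_w$; the exponents are nongaps since $\gd(\LM(F_j^U\phi_s))=\gd(F_j^U)+s$ and $\LM(F_j^U\phi_s)\mid\phi$. Here $g_j^U$ has degree $\delta-s$, leading monomial $\phi_{\delta-s}$ and leading coefficient $\LC(F_j^U)$, while $g_j^D=\phi_{\delta-\gd(F_j^U)-s}(w\phi_s F_j^U+F_j^D)$ has degree $\delta$, leading monomial $\phi$ and leading coefficient $(w-w_j)\LC(F_j^U)=(w-c)\LC(F_j^U)$, nonzero since $w\neq c$ — this is where the formula $w_j=-d/\LC(F_j^U)$ recalled just before the proposition enters. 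Now put $h=\LC(F_{j'}^U)\,g_j-\LC(F_j^U)\,g_{j'}\in\tilde{I}_w$. The coefficient of $\phi_{\delta-s}$ in $h^U$ is $\LC(F_{j'}^U)\LC(F_j^U)-\LC(F_j^U)\LC(F_{j'}^U)=0$, so $\gd(h^U)<\delta-s$; the coefficient of $\phi$ in $h^D$ is $\LC(F_j^U)\LC(F_{j'}^U)(c'-c)\neq0$, so $\gd(h^D)=\delta$ and $\LM(h^D)=\phi$. Comparing weighted degrees for $>_s$, the $R$-part $h^D$ has weighted degree $\delta$ whereas $h^U z$ has weighted degree $\gd(h^U)+s<\delta$; hence $\LM_s(h)=\phi\in R$, so $\phi\in\Sigma_s(\tilde{I}_w)$, contradicting $\phi\in\Delta_s(\tilde{I}_w)$.

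What remains besides the cancellation above is only bookkeeping immediate from the setup preceding the proposition: that sigma sets of submodules of $Rz\oplus R$ are closed upward under divisibility; that $\LM_{s-1}$ of a $G_i$-type generator coincides with its $\LM_s$ and is unaffected by the substitution $z\mapsto z+w\phi_s$; and the degree identity $\gd(\LM(F_j^U\phi_s))=\gd(F_j^U)+s$ together with the explicit description of $w_j$. So the one place where the argument does genuine work is the construction of $h$ and the two leading-coefficient computations, whose outcome is forced by the hypothesis $c\neq c'$.
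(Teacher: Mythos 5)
Your proof is correct and follows essentially the same route as the paper: the first equality is read off from Proposition~\ref{qhkxa} applied to $\tilde{I}_w$ with Gr\"obner basis $\tilde{B}$, and the disjointness is shown by contradiction via a linear combination of $\phi_{\delta-\gd(F_j^U)-s}F_j(z+w\phi_s)$ and $\phi_{\delta-\gd(F_{j'}^U)-s}F_{j'}(z+w\phi_s)$ whose $>_s$-leading monomial is forced to be $\phi$ because the coefficient there is proportional to $c-c'\neq 0$. Your element $h$ is a scalar multiple of the S-polynomial used in the paper; you just normalize by killing the $z$-part directly and you spell out the degree bookkeeping (that $\gd(h^U)<\delta-s$ and $\gd(h^D)=\delta$) that the paper leaves implicit.
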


\begin{proof}
The first equality follows from Proposition \ref{qhkxa}. It remains to show that the second union is disjoint. Assume that for $c_1\neq c_2$, there is a monomial $\phi\in R$ such that $\phi$ is in the intersection of
\[
	\bigcup_{w_j=c_1}\Sigma_{s-1}(F_j(z+w\phi_s))\cap\Delta_s(\tilde{I}_w)
\]
and
\[
	\bigcup_{w_j=c_2}\Sigma_{s-1}(F_j(z+w\phi_s))\cap\Delta_s(\tilde{I}_w).
\]
Let
\[
	\phi=\psi \LM_{s-1}(F_{j_1}(z+w\phi_s))
	=\chi \LM_{s-1}(F_{j_2}(z+w\phi_s))
\]
with $w_{j_1}=c_1$, $w_{j_2}=c_2$, and monomials $\psi$, $\chi$. Then we will show that
\begin{equation}\label{cjqdd}
	\LM_s(\frac{\psi}{\LC(F_{j_1}^U)}F_{j_1}(z+w\phi_s)-\frac{\chi}{\LC(F_{j_2}^U)} F_{j_2}(z+w\phi_s))=\phi,
\end{equation}
contradicting the assumption that $\phi\in\Delta_s(\tilde{I}_w)$. Indeed notice that $\phi=\LM(\psi F_{j_1}^U\phi_s)=\LM(\chi F_{j_2}^U\phi_s)$. Hence the coefficient of the monomial $\phi$ in the first term of the polynomial in \eqref{cjqdd} is
\[
	\frac{1}{\LC(F_{j_1}^U)}(w+d_1)
\]
where $d_1$ is the coefficient of the monomial $\LM( F_{j_1}^U\phi_s)$ in $F_{j_1}^D$. In the same way, the coefficient of the monomial $\phi$ in the second term after the minus in \eqref{cjqdd} is
\[
	\frac{1}{\LC(F_{j_2}^U)}(w+d_2)
\]
where $d_2$ is the coefficient of the monomial $\LM(F_{j_2}^U\phi_s)$ in $F_{j_2}^D$. These two coefficients are different because we assumed 
\[
	w_{j_1}=-\frac{d_1}{\LC(F_{j_1}^U)}\neq w_{j_2}=-\frac{d_2}{\LC(F_{j_2}^U)}.
\]
Hence \eqref{cjqdd} follows.
\end{proof}

We observe that for $c,w\in\F$ with $w\neq c$, 
\[
	\bigcup_{w_j=c}\Sigma_{s-1}(F_j(z+w\phi_s))\cap\Delta_s(\tilde{I}_w)
	=\bigcup_{w_j=c}\Sigma(F_j^U\phi_s)\cap\Delta(\set{G_i^D}).
\]
Therefore this set is independent of $w$, and is determined by $B^{(s)}$. Let 
\[
	d_c=\Bigl|\bigcup_{w_j=c}\Sigma(F_j^U\phi_s)\cap\Delta(\set{G_i^D})\Bigr|.
\]
Then Proposition \ref{kaqqe} implies 
\[
	|\Delta_{s-1}(\tilde{I}_w)\cap Rz|-|\Delta_s(\tilde{I}_w)\cap Rz|=\sum_{c\neq w}d_c
\]
is smallest when $w=c$ with $d_c$ largest. Now we can elaborate the main steps of the interpolation decoding algorithm as follows:

\begin{description}
\item[M1] If $s$ is a nongap $\le u$, then do the following, but otherwise let $\tilde{B}=\set{G_i,F_j}$. 
\begin{description}
\item[M1.1] Compute the set $W=\set{w_j}$, where 
\[
	w_j=-\frac{d}{\LC(F_j^U)},
\]
and $d$ is the coefficient of the monomial $\LM(F_j^U\phi_s)$ in $F_j^D$.
\item[M1.2] For each $c\in W$, compute the value 
\[
	d_c=\Bigl|\bigcup_{w_j=c}\Sigma(F_j^U\phi_s)\cap\Delta(\set{G_i^D})\Bigr|.
\]
\item[M1.3] Let $w^{(s)}=c$ with largest $d_c$, and let 
\[
	\tilde{B}=\set{G_{i}(z+w^{(s)}\phi_s),F_j(z+w^{(s)}\phi_s)}.
\]
\end{description}
\item[M2] Suppose $\tilde{B}=\set{\tilde{G}_i,\tilde{F}_j}$. Let
\[
	B^{(s-1)}=\set{\tilde{G}_i,\tilde{F}_j\mid \LM_{s-1}(\tilde{F}_j)\in R\cap\Delta_s(\set{\tilde{G}_i})}'\cup\bigcup_{j}'\spoly(\tilde{F}_j).
\]
\end{description}

\begin{theorem}
The algorithm outputs $w^{(s)}=\gw_s$ for all $s\in S,s\le u$ if
\[
	d_u=\min_{s\in S,s\le u}\nu(s)>2\wt(e),
\]
where $\nu(s)=|\Delta(J)\cup\Delta(R\phi_s)|-s$ for $s\in S$. Moreover $d_u\ge n-u$.
\end{theorem}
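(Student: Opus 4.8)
The plan is to obtain the first assertion by descending induction on the loop variable $s$, with Theorem~\ref{mxmjw} supplying the decisive step, and to obtain $d_u\ge n-u$ by a one-line counting argument.

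For the induction, take as hypothesis $H(s)$ the statement that, when the main loop reaches $s$, the set $B^{(s)}$ produced by the algorithm is a Gr\"obner basis of $I_{v^{(s)}}$ with respect to $>_s$ and that $w^{(t)}=\gw_t$ for every nongap $t$ with $s<t\le u$. The base case $H(N)$ is the observation recalled just before the algorithm, namely that $\set{\eta_i}\cup\set{z-h_v}$ is a Gr\"obner basis of $I_v=I_{v^{(N)}}$ with respect to $>_N$; here we may assume $N\ge u$, since $I_v$ does not depend on the chosen representative $h_v$ and any $h_v$ of sufficiently large degree with $\ev(h_v)=v$ serves. For the inductive step, assume $H(s)$. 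If $s>u$ or $s$ is a gap $\le u$, then $v^{(s-1)}=v^{(s)}$, hence $I_{v^{(s-1)}}=I_{v^{(s)}}$, and step \textbf{M2} — which is exactly the Gr\"obner basis conversion established in Section~\ref{jksadde} — turns $B^{(s)}$ into a Gr\"obner basis $B^{(s-1)}$ of $I_{v^{(s-1)}}$ with respect to $>_{s-1}$, so $H(s-1)$ holds.

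Now suppose $s$ is a nongap $\le u$. Then $I_{v^{(s-1)}}=\tilde{I}_{\gw_s}$, and $\nu(s)\ge d_u>2\wt(e)$, so Theorem~\ref{mxmjw} applies and (by its proof) $w=\gw_s$ is the unique minimizer of $|\Delta_{s-1}(\tilde{I}_w)\cap Rz|$ over $w\in\F$. On the other hand, by Proposition~\ref{kaqqe} and the computation following it, $|\Delta_{s-1}(\tilde{I}_w)\cap Rz|=K+D-d_w$ when $w\in W=\set{w_j}$ and $=K+D$ when $w\notin W$, where $K=|\Delta_s(\tilde{I}_w)\cap Rz|$ and $D=\sum_{c\in W}d_c$ are independent of $w$. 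Strict minimality then forces $\max_{c\in W}d_c>0$, forces this maximum to be attained at a unique $c\in W$, and identifies that $c$ with the unique minimizer $\gw_s$ of Theorem~\ref{mxmjw}. Since step \textbf{M1.3} chooses $w^{(s)}$ to be precisely the $c\in W$ with largest $d_c$, we conclude $w^{(s)}=\gw_s$. Therefore $\tilde{B}=\set{G_i(z+\gw_s\phi_s),F_j(z+\gw_s\phi_s)}$ is a Gr\"obner basis of $\tilde{I}_{\gw_s}=I_{v^{(s-1)}}$ with respect to $>_s$, and step \textbf{M2} converts it into a Gr\"obner basis $B^{(s-1)}$ with respect to $>_{s-1}$, establishing $H(s-1)$. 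By induction $w^{(s)}=\gw_s$ for all nongaps $s\le u$, which is exactly what \textbf{Finalize} outputs.

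For the last claim, fix a nongap $s\le u$. Since $\Delta(J)\subseteq\Delta(J)\cup\Delta(R\phi_s)$ and $|\Delta(J)|=n$ by \eqref{camdzs}, we get $\nu(s)=|\Delta(J)\cup\Delta(R\phi_s)|-s\ge n-s\ge n-u$, and taking the minimum over all such $s$ gives $d_u\ge n-u$. I expect the heart of the argument to be the third paragraph: one must exploit the strictness in Theorem~\ref{mxmjw} — equivalently, the strict inequality $d_u>2\wt(e)$ — both to rule out ties among the $d_c$ and to exclude the degenerate possibility $\gw_s\notin W$, since only then does the greedy rule in \textbf{M1.3} provably return $\gw_s$ rather than merely some value realizing the minimum of $|\Delta_{s-1}(\tilde{I}_w)\cap Rz|$.
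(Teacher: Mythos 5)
Your proof is correct and follows the same route as the paper: Theorem~\ref{mxmjw} drives each nongap iteration $s\le u$, and the bound $d_u\ge n-u$ follows from $|\Delta(J)\cup\Delta(R\phi_s)|\ge|\Delta(J)|=n$. The paper's proof is terser, simply asserting that Theorem~\ref{mxmjw} implies $w^{(s)}=\gw_s$ at each iteration; your third paragraph usefully makes explicit the bridge the paper leaves implicit, namely that strict minimality of $|\Delta_{s-1}(\tilde{I}_w)\cap Rz|$ forces $\gw_s\in W$ and forces $d_{\gw_s}$ to be the unique maximum among the $d_c$, so that the greedy rule in \textbf{M1.3} really does return $\gw_s$.
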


\begin{proof}
By Theorem \ref{mxmjw}, the condition $d_u>2\wt(e)$ implies that the algorithm computes $w^{(s)}=\gw_s$ for each iteration for nongap $s$ from $u$ to $0$. To see $d_u\ge n-u$, notice that $|\Delta(J)\cup\Delta(R\phi_s)|\ge |\Delta(J)|=n$.
\end{proof}

\section{Decoding Hermitian Codes}\label{ckdwdd}

In this section, we demonstrate the decoding algorithm on the Hermitian codes defined on Hermitian curves with equation 
\[
	y^q+y-x^{q+1}=0
\]
over $\F_{q^2}$. There are $q^3$ rational points on the Hermitian curve, and $J=\ideal{x^{q^2}-x}$. We now determine the performance of the decoding algorithm for the Hermitian code $C_u$.

\begin{theorem}
For nongap $u<q^3$,
\[
	\begin{aligned}
	d_u=\begin{cases}
	q^3-aq & b \le a+q-q^2\\
	q^3-u & b> a+q-q^2
	\end{cases}
	\end{aligned}
\]
if $u=aq+b$, $0\le b<q$.
\end{theorem}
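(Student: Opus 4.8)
The plan is to compute $\nu(s)=|\Delta(J)\cup\Delta(R\phi_s)|-s$ directly for the Hermitian curve and then minimize over nongaps $s\le u$. First I would make the combinatorial objects explicit. Here $a=q$, $b=q+1$, so the semigroup is $S=\ideal{q,q+1}$ and a nongap $s\le u$ is written $s=qi+(q+1)j$ with $0\le j<q$; correspondingly $\phi_s=x^iy^j$ and $\Delta(R\phi_s)=\set{x^{i'}y^{j'}\mid i'<i \text{ or } (i'\ge i \text{ and } j'<j)}$ — more precisely the monomials \emph{not} divisible by $\phi_s$, which I would describe as a staircase region in the $(i,j)$-grid with $0\le j<q$. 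Since $J=\ideal{x^{q^2}-x}$ and the curve equation lets one reduce $y$-powers, the leading monomial of $x^{q^2}-x$ under $>$ is $x^{q^2}$ (as $\gd(x^{q^2})=q^3 > \gd(x)=q$), so $\Delta(J)=\set{x^{i'}y^{j'}\mid 0\le i'<q^2,\ 0\le j'<q}$, a $q^2\times q$ rectangle of size $q^3=n$, consistent with \eqref{camdzs}.

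Next I would compute $|\Delta(J)\cup\Delta(R\phi_s)|$. Write $u=aq+b=qi+(q+1)j$ with the convention $0\le b<q$; one should first pin down the relation between $(i,j)$ and $(a,b)$ in the theorem statement. The union $\Delta(J)\cup\Delta(R\phi_s)$ consists of the $q^2\times q$ rectangle together with the part of the $\phi_s$-staircase sticking out past the rectangle, i.e. the monomials $x^{i'}y^{j'}$ with $i'\ge q^2$, $0\le j'<q$, that are not divisible by $\phi_s=x^iy^j$. That extra part is $\set{x^{i'}y^{j'}\mid i'\ge q^2,\ 0\le j'<j}$ when $i\le q^2$ — infinitely many? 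No: it must be finite because $\Delta(R\phi_s)$ is cofinite-complement-like only relative to $S$, so I would instead observe $\Delta(R\phi_s)$ is itself finite of size $s$ (by the second lemma's computation $|S\setminus(s+S)|=s$), hence the overshoot past the rectangle has size $s-|\Delta(R\phi_s)\cap\Delta(J)|$ and therefore $|\Delta(J)\cup\Delta(R\phi_s)| = n + s - |\Delta(R\phi_s)\cap\Delta(J)|$, giving the clean formula $\nu(s)=n-|\Delta(R\phi_s)\cap\Delta(J)|$. So the task reduces to \emph{maximizing} $|\Delta(R\phi_s)\cap\Delta(J)|$ over nongaps $s\le u$, and $d_u=n-\max_{s\le u}|\Delta(R\phi_s)\cap\Delta(J)|$.

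Now $\Delta(R\phi_s)\cap\Delta(J)$ is the set of monomials $x^{i'}y^{j'}$ with $0\le i'<q^2$, $0\le j'<q$, not divisible by $x^iy^j$; for $i<q^2$ this has size $q\cdot i + (q^2-i)\cdot j = q^2 j + i(q-j)$ when I account for the two regions ($j'<j$ arbitrary $i'<q^2$: contributes $q^2 j$; and $j'\ge j$ forces $i'<i$: contributes $(q-j)\cdot i$). So for nongap $s=qi+(q+1)j\le u$ with $i<q^2$ I get $\nu(s)=q^3-q^2 j-i(q-j)$, and I want its minimum, i.e. I want to \emph{maximize} $f(i,j)=q^2 j+i(q-j)$ subject to $qi+(q+1)j\le u$, $0\le j<q$, $i\ge 0$ integer. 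This is a linear-ish optimization: for fixed $j$, $f$ increases in $i$, so push $i$ to its max $i=\lfloor(u-(q+1)j)/q\rfloor$; then compare across $j$. The case split in the theorem, governed by whether $b\le a+q-q^2$ or not (equivalently comparing $u$'s shape to the threshold where it pays to max out $j$ versus $i$), should fall out of this comparison: when $u$ is small enough the optimum sits at a corner giving $d_u=q^3-aq$ (a value independent of the exact $u$), and when $u$ is large the optimum tracks $u$ linearly giving $d_u=q^3-u$. The main obstacle I anticipate is the bookkeeping with floors in $i=\lfloor(u-(q+1)j)/q\rfloor$ and correctly identifying which $(i,j)$ is optimal in each regime — in particular verifying that the candidate maximizer is an admissible nongap $\le u$ and that no nearby lattice point does better; the boundary case $i=q^2$ (where the overshoot formula changes) and checking the inequality direction in the threshold $b\le a+q-q^2$ will need care. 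Everything else is the routine linear optimization sketched above together with the identity $\nu(s)=n-|\Delta(R\phi_s)\cap\Delta(J)|$.
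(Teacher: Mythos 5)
Your reduction $\nu(s)=n-|\Delta(R\phi_s)\cap\Delta(J)|$ (via inclusion--exclusion, $|\Delta(J)|=n$, $|\Delta(R\phi_s)|=s$) is correct and is a nice alternative to the complement identity $|\Delta(J)\cup\Delta(R\phi_s)|=|\Sigma(J)\cap\Delta(R\phi_s)|+|\Delta(J)|$ that the paper uses. The description of $\Delta(J)$ as the $q^2\times q$ rectangle is also correct. The problem is in the count of $|\Delta(R\phi_s)\cap\Delta(J)|$.

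You implicitly used the criterion ``$x^iy^j$ divides $x^{i'}y^{j'}$ iff $i'\ge i$ and $j'\ge j$.'' That is false in $R$. Divisibility of monomials in the paper's sense is: $\phi_s\mid\phi_t$ iff $t-s\in S$. Working through the semigroup $S=\langle q,q+1\rangle$, one finds that $x^iy^j\mid x^{i'}y^{j'}$ iff $(j'\ge j$ and $i'\ge i)$ \emph{or} $(j'<j$ and $i'\ge i+q+1)$; the second alternative comes from the curve relation $y^q=x^{q+1}-y$, which lets a product with $y$-degree $\ge q$ wrap around into a larger power of $x$. Consequently, in your ``two regions'' count, the region $j'<j$ does \emph{not} contribute all $q^2$ values of $i'$: it contributes only $\min(i+q+1,\,q^2)$ of them. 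Hence $|\Delta(R\phi_s)\cap\Delta(J)|=q^2j+i(q-j)$ only when $i\ge q^2-q$; when $i<q^2-q$ the whole staircase $\Delta(R\phi_s)$ already sits inside the $q^2\times q$ rectangle and the correct count is simply $s$, giving $\nu(s)=q^3-s$. In other words your $\nu(s)=q^3-q^2j-i(q-j)$ is missing exactly the $\max\{\,\cdot\,,0\}$ that appears in the paper's $\nu(s)=s_2\max\{s_1-s_2+q+1-q^2,0\}+q^3-s$ (in the variables $s_1=i+j$, $s_2=j$, the two formulas coincide on $i\ge q^2-q$ and differ by $j(q^2-q-1-i)$ otherwise).

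This gap is fatal to the optimization, not just cosmetic. Take the paper's running example $q=3$, $u=16$: maximizing your $f(i,j)=q^2j+i(q-j)$ subject to $qi+(q+1)j\le 16$, $0\le j<q$, $i\ge 0$ gives $f(2,2)=20$ and hence $d_u=27-20=7$, whereas the true answer (and the one the theorem and the paper's worked example give) is $d_u=q^3-u=11$. To repair the argument you would need to maximize $\min\{s,\,q^2j+i(q-j)\}$ instead of $q^2j+i(q-j)$, or equivalently restore the $\max\{i+q+1-q^2,0\}$ factor; once that is in place, the case split on $b\le a+q-q^2$ does fall out of the linear optimization as you anticipated.
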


\begin{proof}
We first compute $\nu(s)$ for nongap $s=qs_1+s_2<q^3$. As 
\[
	\begin{split}
	|\Delta(J)\cup\Delta(R\phi_s)|&=|\Sigma(J)\cap\Delta(R\phi_s)|+|\Delta(J)|\\
	&=|\set{t\in S\mid q^3+t\notin s+S}|+q^3.
	\end{split}
\]
we have $\nu(s)=|\set{t\in S\mid q^3+t-s\notin S}|+q^3-s$. Note that 
\[
	q^3+t-s=q(q^2+t_1-s_1)+t_2-s_2
\]
with $t=qt_1+t_2$. Therefore $q^3+t-s\notin S$ if and only if  
\[
	t_2-s_2\ge 0,\quad q^2+t_1-s_1<t_2-s_2
\]
or 
\[
	t_2-s_2<0,\quad q^2+t_1-s_1<q+1+t_2-s_2.
\]
The first case is actually impossible since $s_1<q^2$. Hence
\[
	|\set{t\in S\mid q^3+t-s\notin S}|=s_2\max\set{s_1-s_2+q+1-q^2,0}.
\]
Thus 
\[
	\nu(s)=s_2\max\set{s_1-s_2+q+1-q^2,0}+q^3-s
\]
for $s=qs_1+s_2<q^3$. If $a-b+q-q^2\ge 0$, then the minimum is attained at $s=aq$, and hence
$d_u=q^3-aq$ while if $a-b+q-q^2<0$, then the minimum is attained at $s=u$, and hence $d_u=q^3-u$.
\end{proof}

\begin{figure}
\begin{center}
\begin{tikzpicture}[scale=2,x=3pt,y=2pt,
	declare function={
		q = 3;
		b(\u) = mod(\u,q);
		a(\u) = div(\u,q);
		d(\u) = b(\u) <= a(\u) - (q^2 - q) ? q^3 - a(\u)*q : q^3 - \u;
		s(\u)= a(\u) - b(\u) < 0 ? a(\u)*q+a(\u) : \u;
		del(\u)= d(s(\u)); 
	}]	
	\draw[samples at={0,1,...,27},mark=ball,mark size=.7pt,ball color=black,only marks,variable=\u] 
		plot (\u,{del(\u)});	
	\draw (0,0) -- (28,0);
	\draw (0,0) -- (0,27);
	\foreach \x in {0,10,...,27}
		\draw (\x,0) -- (\x,3pt);	
	\foreach \x in {0,5,...,27}
		\draw (\x,0) -- (\x,2pt);		
	\foreach \x in {10,20,...,27}
		\draw (\x,0) node[anchor=north] {\x};	
	\foreach \y in {0,10,...,27}
		\draw (0,\y) -- (3pt,\y);
	\foreach \y in {0,5,...,27}
		\draw (0,\y) -- (2pt,\y);
	\foreach \y in {0,10,...,27}
		\draw (0,\y) node[anchor=east] {\y};			
\end{tikzpicture}
\caption{Decoding performance of Hermitian codes of length $27$}\label{ckqe}
\label{default}
\end{center}
\end{figure}

Figure \ref{ckqe} shows the decoding performance $d_u$ of $C_u$ over the Hermitian curve $y^3+y-x^4=0$ over $\F_9$, where $\F_9=\F_3(\ga)$ with $\ga^2-\ga-1=0$. The $27$ rational points on the curve are
\[
	\begin{gathered}
	( 0, 0 ), ( 0, \ga^2 ), ( 0, \ga^6 ), ( 1, 2 ), ( 1, \ga ), ( 1, \ga^3 ), ( 2, 2 ), ( 2, \ga ), ( 2, \ga^3 ),\\
 	( \ga, 1 ), ( \ga, \ga^7 ), ( \ga, \ga^5 ), ( \ga^2, 2 ), ( \ga^2, \ga ), ( \ga^2, \ga^3 ), ( \ga^7, 1 ), ( \ga^7, \ga^7 ), ( \ga^7, \ga^5 ),\\
	 ( \ga^5, 1 ), ( \ga^5, \ga^7 ), ( \ga^5, \ga^5 ), ( \ga^3, 1 ), ( \ga^3, \ga^7 ), ( \ga^3, \ga^5 ), ( \ga^6, 2 ), ( \ga^6, \ga ), ( \ga^6, \ga^3 ),
	\end{gathered}
\]
and there is a unique point $P_\infty$ at infinity. As $\gd(x)=3$ and $\gd(y)=4$, the numerical semigroup of the coordinate ring $R$ is
\[
	S=\tuple{3,4}=\set{0,3,4,6,7,8,9,10,\dots}.
\] 
Note that $S$ has three gaps $1$, $2$, and $5$. The monomials of $R$ correspond to nongaps in $S$ and are displayed in the diagram
\[
\begin{tikzpicture}[scale=1.2,x=4mm,y=4mm,baseline=(current bounding box)]
  	\draw[dimgray,fill] (0,0) |- (11,3) |- (0,0);	
	\draw (0,0) grid[step=4mm] +(11,3);
	\draw (0,0) +(.5,.5) node[scale=.6] {$1^{\vphantom{1}}$};
	\draw (1,0) +(.5,.5) node[scale=.6] {$x^{\vphantom{1}}$};
	\draw (10,0) +(.5,.5) node[scale=.6] {$\cdots$};
	\foreach \x in {2,...,9}	  
	  		\draw (\x,0) +(.5,.5) node[scale=.6] {$x^{\x}$};
	\draw (0,1) +(.5,.5) node[scale=.6] {$y^{\vphantom{1}}$};
	\draw (1,1) +(.5,.5) node[scale=.6] {$xy^{\vphantom{1}}$};
	\draw (10,1) +(.5,.5) node[scale=.6] {$\cdots$};
	\foreach \x in {2,...,9}	  
	  		\draw (\x,1) +(.5,.5) node[scale=.6] {$x^{\x}y$};
	\draw (0,2) +(.5,.5) node[scale=.6] {$y^2$};
	\draw (1,2) +(.5,.5) node[scale=.6] {$xy^2$};
	\draw (10,2) +(.5,.5) node[scale=.6] {$\cdots$};
	\foreach \x in {2,...,9}	  
	  	\draw (\x,2) +(.5,.5) node[scale=.6] {$x^{\x}y^2$};	  
\end{tikzpicture}
\]

Let $u=16$. Then the Hermitian code $C_{16}$ has dimension $14$ and minimum distance $11$, and the decoding algorithm can correct up to $5$ errors. 
Suppose we received the vector
\[
	v=(0,0,0,0,0,\ga^2,2,0, 0, 0, 0, 0, 0, 0, 0, 0, 0, 0, 0, \ga^3, 0, 0, \ga^7, 0, 0, 2, 0)
\]
from a noisy channel. We now follow the steps of the decoding algorithm.

The algorithm first compute the Lagrange interpolation of $v$,
\[
	\begin{split}
	h_v&=\ga^3x^8y^2 + x^7y^2 + \ga^6x^8y + \ga^7x^6y^2 + x^7y + 2x^8 + \ga^5x^5y^2 \\
	&\quad + x^6y + \ga x^7 + \ga^3x^4y^2 + \ga x^5y + \ga^6x^6 + \ga^6x^3y^2 + \ga^6x^4y\\
	&\quad + 2x^2y^2 + \ga^7x^3y + 2x^4 + \ga^2xy^2 + \ga^2x^3 + \ga^3xy + x.
	\end{split}
\]
The algorithm iterates the main steps for $s$ from $N=\gd(h_v)=32$ to $0$. The ideal $J$ has Gr\"obner basis $\set{\eta_1=x^9-x}$. Hence the Gr\"obner basis of $I_{v^{(32)}}=I_v$ is
\[
B^{(32)}=\left\{\begin{array}{rcrcl}
G_1&=&                          0\,z&+&x^9-x\\
F_1&=&                           1\,z&+&\ga^7x^8y^2+\cdots\\
\end{array}\right\}
\]
\begin{center}
\begin{tikzpicture}[scale=1.2,x=4mm,y=4mm,baseline=(current bounding box)]
  \draw[dimgray,fill] (0,0) |- (11,3) |- (0,0);
  \draw[lightgray,fill] (11,3) |- (0,0) |- (11,3);  
  \draw (0,0) grid[step=4mm] +(11,3);  
  \draw (0.5,0.5) node[scale=.6] {$1$};
\end{tikzpicture}
\quad
\begin{tikzpicture}[scale=1.2,x=4mm,y=4mm,baseline=(current bounding box)]
  \draw[dimgray,fill] (0,0) |- (11,3) |- (0,0);
  \draw[lightgray,fill] (11,3) |- (9,0) |- (11,3);
  \draw (0,0) grid[step=4mm] +(11,3);  
  \draw (9.5,0.5) node[scale=.6] {$x^9$}; 
\end{tikzpicture}
\end{center}
Here the left diagram exhibits the monomials in $\Sigma_{32}(I_{v^{(32)}})\cap Rz$ omitting the common $z$ variable, while the right diagram shows the monomials in $\Sigma_{32}(I_{v^{(32)}})\cap R$. The leading terms of the polynomials in the Gr\"obner basis are also shown.

For $s\ge u=16$ or a gap $s$, as $\tilde{B}=\set{\tilde{G}_i,\tilde{F}_j}=B^{(s)}=\set{G_i,F_j}$ in the step \textbf{M1}, we will omit the tilde in the following. In the step \textbf{M2}, $\LM_{31}(F_1)=x^8y^2\in R\cap \Delta_{32}(G_1)$, and the lcms of $\LM_{31}(F_1)=x^8y^2$ and $\LM_{32}(G_1)=x^9$ are $x^9y^2$ and $x^{12}$. Hence $\spoly(F_1)=\set{\ga xz+\ga^5x^8y^2 + \dots+ \ga^5x^2,\ga yz+\ga^5x^{11} + \dots + \ga^2xy}$. Then the Gr\"obner basis of $I_{v^{(31)}}$ is
\[
B^{(31)}=\left\{\begin{array}{rcrcl}
G_1&=&                          0\,z&+&x^9+\cdots\\
G_2&=&                          1\,z&+&\ga^7x^8y^2+\cdots\\
F_1&=&                     \ga x\,z&+&\ga^5x^8y^2+\cdots\\
F_2&=&                      \ga y\,z&+&\ga^5x^{11}+\cdots\\
\end{array}\right\}
\]
\begin{center}
\begin{tikzpicture}[scale=1.2,x=4mm,y=4mm,baseline=(current bounding box)]
  \draw[dimgray,fill] (0,0) |- (11,3) |- (0,0);
  \draw[lightgray,fill] (11,3) |- (1,0) |- (0,1) |- (11,3);    
  \draw (0,0) grid[step=4mm] +(11,3);  
  \draw (1.5,0.5) node[scale=.6] {$x$};
  \draw (0.5,1.5) node[scale=.6] {$y$};  
\end{tikzpicture}
\quad
\begin{tikzpicture}[scale=1.2,x=4mm,y=4mm,baseline=(current bounding box)]
  \draw[dimgray,fill] (0,0) |- (11,3) |- (0,0);
  \draw[lightgray,fill] (11,3) |- (9,0) |- (8,2) |- (11,3);
  \draw (0,0) grid[step=4mm] +(11,3);  
  \draw (9.5,0.5) node[scale=.6] {$x^9$};
  \draw (8.5,2.5) node[scale=.6] {$x^8y^2$};  
\end{tikzpicture}
\end{center}
As $\LM_s(F_1),\LM_s(F_2)\in Rz$ for  $s=31,30$, there is no change in the Gr\"obner basis. So we get to the unaltered Gr\"obner basis of $I_{v^{(29)}}$ 
\[
B^{(29)}=\left\{\begin{array}{rcrcl}
G_1&=&                          0\,z&+&x^9+\cdots\\
G_2&=&                          1\,z&+&\ga^7x^8y^2+\cdots\\
F_1&=&                     \ga x\,z&+&\ga^5x^8y^2+\cdots\\
F_2&=&                     \ga y\,z&+&\ga^5x^{11}+\cdots\\
\end{array}\right\}
\]
\begin{center}
\begin{tikzpicture}[scale=1.2,x=4mm,y=4mm,baseline=(current bounding box)]
  \draw[dimgray,fill] (0,0) |- (11,3) |- (0,0);
  \draw[lightgray,fill] (11,3) |- (1,0) |- (0,1) |- (11,3);    
  \draw (0,0) grid[step=4mm] +(11,3);  
  \draw (1,0) +(.5,.5) node[scale=.6] {$x$};
  \draw (0,1) +(.5,.5) node[scale=.6] {$y$};  
\end{tikzpicture}
\quad
\begin{tikzpicture}[scale=1.2,x=4mm,y=4mm,baseline=(current bounding box)]
  \draw[dimgray,fill] (0,0) |- (11,3) |- (0,0);
  \draw[lightgray,fill] (11,3) |- (9,0) |- (8,2) |- (0,3) |- (11,3);
  \draw (0,0) grid[step=4mm] +(11,3);  
  \draw (9,0) +(.5,.5) node[scale=.6] {$x^9$};
  \draw (8,2) +(.5,.5) node[scale=.6] {$x^8y^2$};  
\end{tikzpicture}
\end{center}
Now since $\LM_{29}(G_2)=x^8y^2$ divides $\LM_{28}(F_1)=x^8y^2$ and $\LM_{29}(G_1)=x^9$ divides $\LM_{28}(F_2)=x^{11}$, both $\LM_{28}(F_1)$ and $\LM_{28}(F_2)$ are in $R\cap\Sigma_{29}(G_1,G_2)$ Thus
\[
	\begin{aligned}
	\spoly(F_1)&=\set{2xz + \ga^5z+\ga^6x^9y + \dots + \ga x},\\
	\spoly(F_2)&=\set{2yz+\ga^6x^8y^2 + \dots+ \ga^5xy}.
	\end{aligned}
\]
Hence the Gr\"obner basis of $I_{v^{(28)}}$ is
\[
B^{(28)}=\left\{\begin{array}{rcrcl}
G_1&=&                          0\,z&+&x^9+\cdots\\
G_2&=&                          1\,z&+&\ga^7x^8y^2+\cdots\\
F_1&=&                     (2 x+\ga^5)\,z&+&\ga^6 x^9y+\cdots\\
F_2&=&                     2 y\,z&+&\ga^6 x^8y^2+\cdots\\
\end{array}\right\}
\]
\begin{center}
\begin{tikzpicture}[scale=1.2,x=4mm,y=4mm,baseline=(current bounding box)]
  \draw[dimgray,fill] (0,0) |- (11,3) |- (0,0);
  \draw[lightgray,fill] (11,3) |- (1,0) |- (0,1) |- (11,3);    
  \draw (0,0) grid[step=4mm] +(11,3);  
  \draw (1,0) +(.5,.5) node[scale=.6] {$x$};
  \draw (0,1) +(.5,.5) node[scale=.6] {$y$};  
\end{tikzpicture}
\quad
\begin{tikzpicture}[scale=1.2,x=4mm,y=4mm,baseline=(current bounding box)]
  \draw[dimgray,fill] (0,0) |- (11,3) |- (0,0);
  \draw[lightgray,fill] (11,3) |- (9,0) |- (8,2) |- (0,3) |- (11,3);
  \draw (0,0) grid[step=4mm] +(11,3);  
  \draw (9,0) +(.5,.5) node[scale=.6] {$x^9$};
  \draw (8,2) +(.5,.5) node[scale=.6] {$x^8y^2$};  
\end{tikzpicture}
\end{center}
Similar steps are iterated. Eventually, we get to the Gr\"obner basis of $I_{v^{(16)}}$,
\[
B^{(16)}=\left\{\begin{array}{rcrcl}
G_1&=&                          0\,z&+&x^9+\cdots\\
G_2&=&                     (\ga^2 xy+\dots)\,z&+&\ga^2 x^7y+\cdots\\
F_1&=&(\ga^2x^2+\cdots)\,z&+&0\\
F_2&=&            (\ga^5y^2+\cdots)\,z&+&x^8+\cdots\\
\end{array}\right\}
\]
\begin{center}
\begin{tikzpicture}[scale=1.2,x=4mm,y=4mm,baseline=(current bounding box)]
  \draw[dimgray,fill] (0,0) |- (11,3) |- (0,0);
  \draw[lightgray,fill] (11,3) |- (2,0) |- (0,2) |- (11,3);    
  \draw (0,0) grid[step=4mm] +(11,3);  
  \draw (2,0) +(.5,.5) node[scale=.6] {$x^2$};
  \draw (0,2) +(.5,.5) node[scale=.6] {$y^2$};  
\end{tikzpicture}
\quad
\begin{tikzpicture}[scale=1.2,x=4mm,y=4mm,baseline=(current bounding box)]
  \draw[dimgray,fill] (0,0) |- (11,3) |- (0,0);
  \draw[lightgray,fill] (11,3) |- (9,0) |- (7,1) |- (0,3) |- (11,3);
  \draw (0,0) grid[step=4mm] +(11,3);  
  \draw (9,0) +(.5,.5) node[scale=.6] {$x^9$};
  \draw (7,1) +(.5,.5) node[scale=.6] {$x^7y$};  
\end{tikzpicture}
\end{center}
Now $s=16$ is a nongap  and $\le u=16$. So in the step \textbf{M1}, we proceed to guess $\gw_{16}$ for the monomial $\phi_{16}=x^4y$. The leading coefficient of $F_1$ is $\ga^2$ and the coefficient of the monomial $x^6y$ in $F_1$ is $0$, where $x^6y$ is the leading monomial of $x^2\phi_{16}$. Hence $w_1=-(0/\ga^2)=0$. The leading coefficient of $F_2$ is $\ga^5$ and the coefficient of the monomial $x^8$ in $F_2$ is $1$, where $x^8$ is the leading monomial of $y^2\phi_{16}$. Hence $w_2=-(1/\ga^5)=\ga^7$. So $W=\set{0,\ga^7}$. The shape of 
\[
	\bigcup_{w_j=0}\Sigma(F_j^U\phi_{16})\cap\Delta(\set{G_i^D})=\Sigma(x^6y)\cap\Delta(x^9,x^7y)
\]
is
\[
\begin{tikzpicture}[scale=1.2,x=4mm,y=4mm,baseline=(current bounding box)]
  \draw[dimgray,fill] (0,0) |- (11,3) |- (0,0);
  \draw[lightgray,fill] (11,3) |- (9,0) |- (7,1) |- (0,3) |- (11,3);
  \draw[pattern=crosshatch] (6,1) |- (7,3) |- (6,1);  
  \draw (0,0) grid[step=4mm] +(11,3);  
  \draw (9,0) +(.5,.5) node[scale=.6] {$x^9$}; 
  \draw (7,1) +(.5,.5) node[scale=.6] {$x^7y$};    
\end{tikzpicture} 
\]
and thus $d_0=2$. On the other hand, the shape of
\[
	\bigcup_{w_j=\ga^7}\Sigma(F_j^U\phi_{16})\cap\Delta(\set{G_i^D})=\Sigma(x^8)\cap\Delta(x^9,x^7y)
\]
is
\[
\begin{tikzpicture}[scale=1.2,x=4mm,y=4mm,baseline=(current bounding box)]
  \draw[dimgray,fill] (0,0) |- (11,3) |- (0,0);
  \draw[lightgray,fill] (11,3) |- (9,0) |- (7,1) |- (0,3) |- (11,3);
  \draw[pattern=crosshatch] (8,0) |- (9,1) |- (8,0);  
  \draw (0,0) grid[step=4mm] +(11,3);  
  \draw (9,0) +(.5,.5) node[scale=.6] {$x^9$}; 
  \draw (7,1) +(.5,.5) node[scale=.6] {$x^7y$};    
\end{tikzpicture}
\]
and thus $d_{\ga^7}=1$. Hence we take $w^{(16)}=0$. Then
\[
\tilde{B}=\left\{\begin{array}{rcrcl}
\tilde{G}_1&=&                          0\,z&+&x^9+\cdots\\
\tilde{G}_2&=&                     (\ga^2 xy+\dots)\,z&+&\ga^2 x^7y+\cdots\\
\tilde{F}_1&=&(\ga^2x^2+\cdots)\,z&+&0\\
\tilde{F}_2&=&            (\ga^5y^2+\cdots)\,z&+&x^8+\cdots\\
\end{array}\right\}
\]
In the step \textbf{M2}, $\LM_{15}(\tilde{F}_1)=x^2z\in Rz$ and $\LM_{15}(\tilde{F}_2)=x^8\in R\cap\Delta_{15}(\tilde{G}_1,\tilde{G}_2)$. So $\spoly(\tilde{F}_1)=\set{\tilde{F}_1}$, and since the lcm of $\LM_{15}(\tilde{F}_2)=x^8$ and $\LM_{15}(\tilde{G}_1)=x^9$ is $x^9$, and the lcms of $\LM_{15}(\tilde{F}_2)$ and $\LM_{15}(\tilde{G}_2)=x^7y$ are $x^8y$ and $x^{11}$,
\[
	\spoly(\tilde{F}_2)=\set{\ga^5xy^2z+\dots+x,\ga^5x^4z+\dots+x^2,\ga^5x^3y^2z+\dots+\ga^5xy}.
\]
Removing redundant polynomials, we have 
\[
	\begin{gathered}
	\set{\tilde{G}_1,\tilde{G}_2,\tilde{F}_2}'=\set{\tilde{G}_2,\tilde{F}_2},\\
	(\spoly(\tilde{F}_1)\cup\spoly{\tilde{F}_2})'=\set{\tilde{F}_1,\ga^5xy^2z+\dots+x}.
	\end{gathered}
\]
Thus the Gr\"obner basis of $I_{v^{(15)}}$ is
\[
B^{(15)}=\left\{\begin{array}{rcrcl}
G_1&=&(\ga^2 xy+\cdots)\,z&+&\ga^2x^7y+\cdots\\
G_2&=&(\ga^5y^2+\cdots)\,z&+&x^8+\cdots\\
F_1&=&(\ga^2x^2+\cdots)\,z&+&0\\
F_2&=&            (\ga^5xy^2+\cdots)\,z&+&2 x^7y+\cdots\\
\end{array}\right\}
\]
\begin{center}
\begin{tikzpicture}[scale=1.2,x=4mm,y=4mm,baseline=(current bounding box)]
  \draw[dimgray,fill] (0,0) |- (11,3) |- (0,0);
  \draw[lightgray,fill] (11,3) |- (2,0) |- (1,2) |- (0,3) |- (11,3);  
  \draw (0,0) grid[step=4mm] +(11,3);  
  \draw (2,0) +(.5,.5) node[scale=.6] {$x^2$};
  \draw (1,2) +(.5,.5) node[scale=.6] {$xy^2$};     
\end{tikzpicture}
\quad
\begin{tikzpicture}[scale=1.2,x=4mm,y=4mm,baseline=(current bounding box)]
  \draw[dimgray,fill] (0,0) |- (11,3) |- (0,0);
  \draw[lightgray,fill] (11,3) |- (8,0) |- (7,1) |- (0,3) |- (11,3);
  \draw (0,0) grid[step=4mm] +(11,3);  
  \draw (8,0) +(.5,.5) node[scale=.6] {$x^8$}; 
  \draw (7,1) +(.5,.5) node[scale=.6] {$x^7y$};  
\end{tikzpicture}
\end{center}
Continuing in this way, after the last iteration for $s=0$, we get to the Gr\"obner basis of $I_{v^{(-1)}}$,
\[
B^{(-1)}=\left\{\begin{array}{rcrcl}
G_1&=&(\ga^2xy+\cdots)\,z&+&\ga^2x^7y+\cdots\\
G_2&=&(\ga^5y^2+\cdots)\,z&+&x^8+\cdots\\
F_1&=&(\ga^2x^2+\cdots)\,z&+&0\\
F_2&=&            (xy^2+\cdots)\,z&+&0\\
\end{array}\right\}
\]
\begin{center}
\begin{tikzpicture}[scale=1.2,x=4mm,y=4mm,baseline=(current bounding box)]
  \draw[dimgray,fill] (0,0) |- (11,3) |- (0,0);
  \draw[lightgray,fill] (11,3) |- (2,0) |- (1,2) |- (0,3) |- (11,3);  
  \draw (0,0) grid[step=4mm] +(11,3);  
  \draw (2,0) +(.5,.5) node[scale=.6] {$x^2$};
  \draw (1,2) +(.5,.5) node[scale=.6] {$xy^2$};     
\end{tikzpicture}
\quad
\begin{tikzpicture}[scale=1.2,x=4mm,y=4mm,baseline=(current bounding box)]
  \draw[dimgray,fill] (0,0) |- (11,3) |- (0,0);
  \draw[lightgray,fill] (11,3) |- (8,0) |- (7,1) |- (0,3) |- (11,3);
  \draw (0,0) grid[step=4mm] +(11,3);  
  \draw (8,0) +(.5,.5) node[scale=.6] {$x^8$}; 
  \draw (7,1) +(.5,.5) node[scale=.6] {$x^7y$};  
\end{tikzpicture}
\end{center}
Finally, the algorithm output $(w^{(s)}=0\mid\text{nongap $s\le 16$})$.

\section{Remarks}

We presented an interpolation-based unique decoding algorithm for the primal algebraic geometry codes. The algorithm iteratively finds each entry of the sent message by a majority voting procedure. We showed that the majority voting is successful if the codeword encoding the message is close enough to the received vector.

\end{document}